\begin{document}
\begin{frontmatter}
  \title{Graded Differential Categories and\\ Graded Differential Linear Logic\thanksref{ALL}} 	
  \subtitle{\textsc{Corrigendum}}					
 \thanks[ALL]{For this research, the first author was financially supported by a JSPS Postdoctoral Fellowship, Award \#: P21746, while the second author was financially supported by the University of Ottawa and Aix-Marseille University.}   
  \author{Jean-Simon Pacaud Lemay\thanksref{a}\thanksref{myemail}}	
   \author{Jean-Baptiste Vienney\thanksref{b}\thanksref{coemail}}		
   \address[a]{School of Mathematical and Physical Sciences
 \\ Macquarie University\\				
    Sydney, Australia}  							
   \thanks[myemail]{Email: \href{mailto:js.lemay@mq.edu.au} {\texttt{\normalshape
        js.lemay@mq.edu.au}}} 
  \address[b]{Department of Mathematics and Statistics\\University of Ottawa\\
    Ottawa, Canada} 
  \thanks[coemail]{Email:  \href{mailto:jvien065@uottawa.ca} {\texttt{\normalshape
        jeanbaptiste.vienney@gmail.com}}}
\begin{abstract} In Linear Logic ($\mathsf{LL}$), the exponential modality $\oc$ brings forth a distinction between non-linear proofs and linear proofs, where linear means using an argument exactly once. Differential Linear Logic ($\mathsf{DiLL}$) is an extension of Linear Logic which includes additional rules for $\oc$ which encode differentiation and the ability of linearizing proofs. On the other hand, Graded Linear Logic ($\mathsf{GLL}$) is a variation of Linear Logic in such a way that $\oc$ is now indexed over a semiring $R$. This $R$-grading allows for non-linear proofs of degree $r \in R$, such that the linear proofs are of degree $1 \in R$. There has been recent interest in combining these two variations of $\mathsf{LL}$ together and developing Graded Differential Linear Logic ($\mathsf{GDiLL}$). In this paper we present a sequent calculus for $\mathsf{GDiLL}$, as well as introduce its categorical semantics, which we call graded differential categories, using both coderelictions and deriving transformations. We prove that symmetric powers always give graded differential categories, and provide other examples of graded differential categories. We also discuss graded versions of (monoidal) coalgebra modalities, additive bialgebra modalities, and the Seely isomorphisms, as well as their implementations in the sequent calculus of $\mathsf{GDiLL}$. 
\end{abstract}
\begin{keyword} Graded Linear Logic, Graded Differential Linear Logic, Graded Differential Categories, Symmetric Powers
\end{keyword}
\end{frontmatter}
\section{Introduction}
\allowdisplaybreaks

Linear Logic ($\mathsf{LL}$) was introduced by Girard in \cite{girard_linear_1987}, and is a resource-sensitive logic whose categorical semantics are based in monoidal category theory \cite{mellies_categorical_2008}. A key feature of $\mathsf{LL}$ is that there is a natural notion of linearity, allowing for a distinction between non-linear proofs and linear proofs. In many models of $\mathsf{LL}$, the logical notion of linear, meaning using an argument exactly once, coincides with the algebraic notion of linear, meaning preserving addition. In the multiplicative and exponential fragments of $\mathsf{LL}$ ($\mathsf{MELL}$), the separation between linear and non-linear is captured thanks to the exponential modality, which is a unary connective $\oc$, read as ``of course'' or ``bang''. A non-linear proof of $A$ implies $B$ is of the form $\oc A \vdash B$, while a linear proof is of the form $A \vdash B$. A categorical model of $\mathsf{MELL}$, often called a linear category \cite[Def 3]{bierman1995categorical}, is a symmetric monoidal closed category equipped with monoidal coalgebra modality \cite[Sec 3]{blute2015cartesian}, also sometimes called a linear exponential modality, which in particular is a comonad $\oc$ such that each $\oc A$ is naturally a cocommutative comonoid. So categorically speaking, a non-linear map from $A$ to $B$ is a map of type $\oc A \to B$, while a linear map from $A$ to $B$ is one of type $A \to B$. One of the great aspects of monoidal category theory is its flexibility. As such, this allows for many natural variations, extensions, and generalizations of $\mathsf{LL}$ and its fragments. 

In one direction there is Differential Linear Logic ($\mathsf{DiLL}$), which was introduced by Ehrhard and Regnier in \cite{ehrhard2006differential}. In particular, $\mathsf{DiLL}$ is an extension of $\mathsf{MELL}$ which adds a differentiation structural rule for $\oc$, which allows one to differentiate non-linear proofs to obtain linear proofs. The categorical semantics of $\mathsf{DiLL}$ is provided by differential categories, which were introduced by Blute, Cockett, and Seely in \cite{blute2006differential}. Differential categories were further developed by Fiore \cite{fiore2007differential} and Ehrhard \cite{ehrhard2017introduction}, and then again revisited by Blute, Cockett, Seely, and the first author in \cite{Blute2019}. In a differential category, the non-linear maps $\oc A \to B$ are interpreted as smooth maps, meaning infinitely differentiable. The differential structure can be defined in terms of either a deriving transformation $\partial: \oc A \otimes A \to \oc A$ \cite[Def 2.5]{blute2006differential} or a codereliction $\overline{\mathsf{d}}: A \to \oc A$ \cite[Def 4.11]{blute2006differential}, which were shown to indeed be equivalent in \cite{Blute2019,cockett2017there}. 

In another direction there is Graded Linear Logic ($\mathsf{GLL}$), whose conception originated from Girard, Scedrov, and Scott's bounded linear logic \cite{girard1992bounded}. The exponential fragment of $\mathsf{GLL}$ is now graded over a fixed semiring $R$, which means that there are unary connectors $\oc_r$ for each $r \in R$. This allows for non-linear proofs of degree $r$, $\oc_r A \vdash B$, such that the linear proofs are of degree $1 \in R$. Categorically speaking, the $R$-graded exponential modality is captured by an $R$-graded monoidal coalgebra modality \cite[Def 1]{katsumata2018double}, also called a graded linear exponential comonad, first introduced by Brunel et al. in \cite{brunel2014core} (under the name exponential action), and then further studied by Breuvart and Pagani in \cite{breuvart2015modelling}, Gaboardi et al. in \cite{gaboardi2016combining}, and Katsumata in \cite{katsumata2018double}. In particular, an $R$-graded monoidal coalgebra modality is an $R$-graded comonad such that all the $\oc_r A$ naturally form a graded comonoid. So a non-linear map of degree $r$ is a map of type $\oc_r A \to B$, and from a linear map $A \to B$, we can obtain a non-linear map of degree $1$, $\oc_1 A \to B$. 

There has been a recent desire to combine $\mathsf{DiLL}$ and $\mathsf{GLL}$ \cite{breuvart2023unifying,kerjean2023taylor} to obtain Graded Differential Linear Logic ($\mathsf{GDiLL}$). As such, in this paper, we introduce a sequent calculus for $\mathsf{GDiLL}$, and also introduce graded differential categories (Sec \ref{sec:gdiffcat}), which provide the categorical semantics of $\mathsf{GDiLL}$. We provide graded versions of both the deriving transformation, now of type $\partial_r: \oc_r A \otimes A \to \oc_{r+1} A$ (Def \ref{def:diffcat}), and the codereliction, now of type $\overline{\mathsf{d}}: A \to \oc_1 A$ (Def \ref{def:coder}). We will also explain why deriving transformations and coderelicitons are still equivalent in the graded setting (Thm \ref{thm:der-coder}). The main intuition in a graded differential category is that the derivative of a smooth map of degree $r+1$ has degree $r$, which is analogous to the fact that the derivative of $x^{n+1}$ is $(n+1) \cdot x^n$, and that the linear maps are precisely the smooth maps of degree $1$. This intuition is nicely captured by symmetric powers, which we show always give a graded differential category and capture differentiating (homogenous) polynomials (Sec \ref{sec:sympowers}). We also discuss other examples of graded differential categories, including $R$-valued multisets. 

Graded differential categories also provide a solution to the problem presented by the first author in \cite{lemay2020fhilb}. Indeed, in said paper, it is explained why the category of finite dimensional Hilbert spaces ($\mathsf{FHILB}$), the main model of interest in categorical quantum mechanics, has no non-trivial differential category structure. Briefly, in the non-graded setting, $\oc A$ has a distinct ``infinite dimensional'' flavour to it. However, in the graded setting, $\oc_r A$ no longer carries this ``infinite dimensional'' flavour. In particular, this is the case regarding the symmetric powers construction, since individually the symmetric powers of a finite-dimensional Hilbert space are again finite-dimensional. Therefore, while $\mathsf{FHILB}$ may not be an interesting (co)differential category, $\mathsf{FHILB}$ is in fact a very interesting graded (co)differential category. As such, graded differential categories allow us to consider more ``finite dimensional'' flavoured models of differential categories and provide a path for studying differential category theory in categorical quantum mechanics. 

In order to properly define graded differential categories, we also review graded versions of coalgebra modalities and monoidal coalgebra modalities (Sec \ref{sec:gmod}), as well as introduce the novel concept of a graded additive bialgebra modality (Sec \ref{sec:gbialg}). This latter concept provides a solution to the problem regarding Seely isomorphisms in the graded setting. Indeed, in the non-graded setting, a monoidal coalgebra modality can equivalently be described as either a storage modality \cite[Def 10]{Blute2019}, which in particular is a comonad $\oc$ with the Seely isomorphisms $\oc(A \times B) \cong \oc A \otimes \oc B$ (where $\otimes$ is the monoidal product and $\times$ is the product), or as an additive bialgebra modality \cite[Def 5]{Blute2019}, which in particular is a comonad $\oc$ where $\oc A$ is naturally a bialgebra. Unfortunately in the graded setting, the $R$-grading makes the constructions between these concepts no longer properly typed. In particular for the Seely isomorphisms, one should not expect that $\oc_r(A \times B)$ be isomorphic to $\oc_r A \otimes \oc_r B$ -- in particular the symmetric powers example fails this. So we must be careful in the definitions of graded Seely isomorphisms and graded additive bialgebra modalities. 

The fact that, in the non-graded setting, monoidal coalgebra modalities and additive bialgebra modalities are indeed equivalent \cite[Thm 1]{Blute2019} was crucial for proving that there was a bijective correspondence between deriving transformations and coderelictions. Since we wish this bijective correspondence to still hold in the graded setting, we will require by definition that a graded additive bialgebra modality (Def \ref{def:addbialg}) be a graded monoidal coalgebra modality -- which was not required in the non-graded case. As such, graded additive bialgebra modalities lead us to the correct version of the graded Seely isomorphisms (Thm \ref{thm:Seely}) which is that $\oc_r(A \oplus B) \cong \bigoplus_{s+t =r} \oc_s A \otimes \oc_t B$ (where $\oplus$ is the biproduct since we will be working in a setting with additive structure). When taking the grading over the zero semiring $\mathsf{0}$, we recover the non-graded versions, that is, $\mathsf{0}$-graded additive bialgebra modalities, the $\mathsf{0}$-graded Seely isomorphism, and $\mathsf{0}$-graded differential categories are the same as additive bialgebra modalities, the Seely isomorphism, and differential categories. 

\textbf{Corrigendum:} The original version of this paper incorrectly asserted that the running example worked over any semiring $R$. However, the statement of Ex \ref{ex:RELsym} is correct, so our example does work when $R = \mathbb{N}$. This Corrigendum corrects the error by fixing $R = \mathbb{N}$ in Ex \ref{ex:REL!}, \ref{ex:RELd}, \ref{ex:RELbialg}, \ref{ex:RELaddbialg}, and \ref{ex:RELcoder}. This does not affect the overall story or theory, and the rest of the paper remains unchanged. 

\textbf{Conventions:} The graded story involves heavy usage of indexing. To not overload notation, we will omit the object index for identity maps and natural transformations. In an arbitrary category, we write identity maps as $1: A \to A$ and write composition diagrammatically, that is, the composition of maps $f: A \to B$ and $g: B \to C$ is denoted $f;g: A \to C$, which first does $f$ then $g$. For the definitions of this paper, we provide the necessary axioms as equations, while commutative diagram versions for these equations can be found in \cite[App A]{arxiv2023version}, which is a version of this paper which includes an appendix. 

\section{Graded Linear Logic and Graded Coalgebra Modalities}\label{sec:gmod}

In this section, we review the sequent calculus for the exponential fragment of $\mathsf{GLL}$ and its categorical semantics, specifically graded (monoidal) coalgebra modalities. In previous work \cite{breuvart2015modelling,brunel2014core,katsumata2018double}, the grading of $\mathsf{GLL}$ is often considered to be over a partially ordered semiring. For the story of this paper, we will slightly generalize and consider a grading over just a semiring. So throughout this paper, an arbitrary semiring is denoted as the tuple $R = ( \vert R \vert, +, \ast, 0, 1)$, where $\vert R \vert$ is the underlying set, $+$ is the addition, $\ast$ is the multiplication, $0$ the additive unit, and $1$ is the multiplicative unit. As a shorthand, when there is no confusion, we will simply write the multiplication as $rs := r \ast s$. 

In this paper, we work in the intuitionistic setting, so we do not assume negations (since it does not play a role in the differential story). The multiplicative and additive fragments of $\mathsf{GLL}$ are the same as $\mathsf{LL}$ (see \cite{mellies_categorical_2008} for a review of these fragments). For the exponential fragment of $\mathsf{GLL}$, we now have for each $r \in R$ a unary connector $\oc_r$, called the exponential modality at $r$ or just read as "bang $r$", and which together satisfy graded versions of the four exponential modality rules from $\mathsf{LL}$. In this case however, the \textbf{weakening} rule ($\mathsf{w}$) is only for $\oc_0$, the \textbf{dereliction} rule ($\mathsf{d}$) is only for $\oc_1$, while the \textbf{contraction} rule ($\mathsf{c}$) depends on pairs of elements $r,s \in R$ and the \textbf{promotion} rule ($\mathsf{prom}$) depends on tuples of elements $r, s_i \in R$: 
\begin{align}
    \begin{prooftree}
\hypo{\oc_{s_{1}}A, ..., \oc_{s_{n}}A \vdash B}
\infer1[$\mathsf{prom}_{r, s_1, \hdots, s_n}$]{\oc_{r s_{1}}A, ..., \oc_{r s_{n}}A \vdash \oc_{r}B}
\end{prooftree}
&&
\begin{prooftree}
\hypo{\Gamma, A \vdash B}
\infer1[$\mathsf{d}$]{\Gamma, \oc_{1}A \vdash B}
\end{prooftree}
&&
\begin{prooftree}
\hypo{\Gamma \vdash B}
\infer1[$\mathsf{w}$]{\Gamma, \oc_{0}A \vdash B}
\end{prooftree}
&&
\begin{prooftree}
\hypo{\Gamma, \oc_{r}A, \oc_{s}A \vdash B}
\infer1[$\mathsf{c}_{r,s}$]{\Gamma, \oc_{r+s}A \vdash B}
\end{prooftree}
\end{align}
The four rules are each associated with a separate aspect of the semiring structure. The weakening is associated with the zero element, the dereliction to the multiplicative unit, the contraction to addition, and the digging to multiplication. The cut-elimination rules can be found in \cite[Fig 2]{breuvart2015modelling}. 

Now let's turn our attention to the underlying category theory. Recall that the categorical semantics of the multiplicative fragment of $\mathsf{LL}$ is interpreted by a symmetric monoidal closed category and the additive fragment by finite products (see \cite{mellies_categorical_2008} for a review of these details). Since neither being closed nor having products is necessary for the categorical definition of the exponential modality, we will not require them here (though we will add products to the story later in Sec \ref{sec:gbialg} when discussing the Seely isomorphisms). Thus we will for now simply work with symmetric monoidal categories. For simplicity, we allow ourselves to work within a symmetric \emph{strict} monoidal category, meaning that the associativity and unit isomorphisms for the monoidal product are identities. An arbitrary symmetric monoidal category is denoted as the tuple $\mathcal{L} = (\vert \mathcal{L} \vert, \otimes, I, \sigma)$ where $\vert \mathcal{L} \vert$ is the underlying category, $\otimes$ is the monoidal product, $I$ is the monoidal unit, and $\sigma: A \otimes B \to B \otimes A$ is the natural symmetry isomorphism. By strictness, we have that $A \otimes I = A = I \otimes A$ and can write $A_1 \otimes \hdots \otimes A_n$. 

In the categorical semantics of $\mathsf{GLL}$, the exponential modality is interpreted by an $R$-graded symmetric monoidal comonad such that altogether the $\oc_r A$ naturally form an $R$-graded cocommutative comonoid. Unpacking this, we have that $\oc_r$ is a functor, while the four exponential rules are given by natural transformations. The weakening is given by a natural transformation of type $\mathsf{w}: \oc_0 A \to I$ and the dereliction by a natural transformation of type $\mathsf{d}: \oc_1 A \to A$. On the other hand, the contraction is given by an $R$-indexed family of natural transformations $\mathsf{c}_{r,s}: \oc_{r+s} A \to \oc_r A \otimes \oc_s A$, while the promotion gives us three $R$-indexed families: (i) $\mathsf{p}_{r,s}: \oc_{rs} A \to \oc_r \oc_s A$, (ii) $\mu^\otimes_r: \oc_r A \otimes \oc_r B \to \oc_r(A \otimes B)$, and (iii) $\mu^I_r: I \to \oc_r I$. 

We will in fact split the definition in two, one without $\mu^\otimes$, and  $\mu^I$, and one with them. This is because we wish to provide a graded version of Blute, Cockett, and Seely's original definition of a differential category \cite{blute2006differential}, which does not require non-graded versions of $\mu^\otimes$ and $\mu^I$. So the following is the graded version of \cite[Def 1]{Blute2019}. 

\begin{definition}\label{def:coalg} An \textbf{$R$-graded coalgebra modality} on a symmetric monoidal category $\mathcal{L}$ is a tuple $(\oc, \mathsf{p}, \mathsf{d}, \mathsf{c}, \mathsf{w})$ consisting of a family of endofunctors $\oc$ where for every $r \in R$ we have an endofunctor ${\oc_r: \mathcal{L} \to \mathcal{L}}$; two families of natural transformations $\mathsf{p}$ and $\mathsf{c}$ where for every $r, s \in  R$ we have a natural transformation of type $\mathsf{p}_{r,s}: \oc_{rs} A \to \oc_r \oc_s A$ and $\mathsf{c}_{r,s}: \oc_{r+s} A \to \oc_r A \otimes \oc_s A$; two natural transformations $\mathsf{d}: \oc_1 A \to A$ and $\mathsf{w}: \oc_0 A \to I$, and such that the following equalities hold for all $r,s,t \in R$:
\begin{enumerate}[{\em (i)}]
\item $(\oc, \mathsf{p}, \mathsf{d})$ is an $R$-graded comonad:      
\begin{align}\label{comonadeq}
\mathsf{p}_{1,r}; \mathsf{d} = 1 = \mathsf{p}_{r,1}; \oc_r(\mathsf{d}) && \mathsf{p}_{r  s,t}; \mathsf{p}_{r,s} = \mathsf{p}_{r,s t}; \oc_r(\mathsf{p}_{s,t}) 
\end{align}
\item $(A, \mathsf{c}, \mathsf{w})$ is an $R$-graded cocommutative comonoid: 
\begin{align}\label{comonoideq}
   \mathsf{c}_{r+s,t}; (\mathsf{c}_{r,s} \otimes 1) = \mathsf{c}_{r,s+t}; (1 \otimes \mathsf{c}_{s+t}) && \mathsf{c}_{r,0}; (1 \otimes \mathsf{w}) = 1 = \mathsf{c}_{0,r}; (\mathsf{w} \otimes 1) && \mathsf{c}_{r,s}; \sigma = \mathsf{c}_{s,r}
\end{align}
\item $\mathsf{p}$ is an $R$-graded comonoid morphism: 
\begin{align}\label{deltacomonoideq} \mathsf{p}_{r+s,t}; \mathsf{c}_{r,s} = \mathsf{c}_{r  t,s  t}; (\mathsf{p}_{r,t} \otimes \mathsf{p}_{s,t}) && \mathsf{p}_{0,r}; \mathsf{w} = \mathsf{w}
\end{align}
\end{enumerate}
\end{definition} 

In a symmetric monoidal category with an $R$-graded coalgebra modality $\oc$, we can interpret both linear maps and non-linear maps of degree $r \in R$. A linear map from $A$ to $B$ is simply a map of type $A \to B$. On the other hand, a non-linear map of degree $r$ is a map of type $\oc_r A \to B$. We can use $\mathsf{d}$ to forget about the linearity of a linear map $f: A \to B$ and obtain a non-linear map of degree $1$ as the composite $\mathsf{d};f: \oc_1 A \to B$. We stress that in some models, not all non-linear maps of degree $1$ are linear maps. Using $\mathsf{p}$ we can compose non-linear maps $f: \oc_s A \to B$ and $g: \oc_r B \to C$ to obtain a non-linear map of degree $rs$ defined as the composite $\mathsf{p}_{r,s};\oc_r(f);g: \oc_{rs} A \to C$. We can take the product of non-linear maps $f: \oc_r A \to B$ and $g: \oc_s A \to C$ using $\mathsf{c}$ to obtain a non-linear map of degree $r+s$ defined as the composite $\mathsf{c}_{r,s};(f \otimes g): \oc_{r+s} A \to B \otimes C$. Lastly, a (categorical) point of $A$ is a map of type $I \to A$. Every point $b: I \to B$ induces a non-linear map of degree $0$ from any $A$ to $B$ by pre-composing by $\mathsf{w}$, that is, $\mathsf{w};b: \oc_0 A \to B$. If one wishes to write down a sequent calculus for coalgebra modalities, one would replace the promotion rule with a weaker version:
\begin{align}
        \begin{prooftree}
\hypo{\oc_{s}A \vdash B}
\infer1[$\mathsf{prom}_{r,s}$]{\oc_{rs}A \vdash \oc_r B}
\end{prooftree}
\end{align}
This weaker version still allows us to obtain $\mathsf{p}$ and the functoriality of $\oc_r$. 

Adding back $\mu^\otimes_r$, and  $\mu^I_r$, we obtain the notion of an $R$-graded monoidal coalgebra modality. In this paper, we've elected to use the term graded monoidal coalgebra modality rather than graded linear exponential comonad to more closely follow the terminology used in differential category literature, specifically \cite{Blute2019}. So the following is indeed the same as \cite[Def 1]{katsumata2018double} (but without the assumption of order), which is a graded version of \cite[Def 2]{Blute2019}. 

\begin{definition}\label{def:moncoalg} An \textbf{$R$-graded monoidal coalgebra modality} on a symmetric monoidal category $\mathcal{L}$ is a tuple $(\oc, \mathsf{p}, \mathsf{d}, \mathsf{c}, \mathsf{w}, \mu^\otimes, \mu^I)$ consisting of an $R$-graded coalgebra modality $(\oc, \mathsf{p}, \mathsf{d}, \mathsf{c}, \mathsf{w})$, and two families of natural transformations $\mu^\otimes$ and $\mu^I$ where for every $r \in R$ we have natural transformations of type $\mu^\otimes_r: \oc_r A \otimes \oc_r B \to \oc_r(A \otimes B)$ and $\mu^I_r: I \to \oc_r I$, and such that the following equalities hold for all ${r,s,t \in R}$: 
\begin{enumerate}[{\em (i)}]
\item $(\oc, \mu^\otimes, \mu^I)$ is an $R$-graded symmetric monoidal functor:
\begin{align} \label{smfeq}
    (\mu^\otimes_r \otimes 1); \mu^\otimes_r = (1 \otimes \mu^\otimes_r); \mu^\otimes_r &&  (\mu^I_r \otimes 1); \mu^\otimes_r = 1= (1 \otimes \mu^I_r); \mu^\otimes_r && \sigma; \mu^\otimes_r  = \mu^\otimes_r ; \oc_r(\sigma)
\end{align}
\item $\mathsf{p}$ and $\mathsf{d}$ are $R$-graded monoidal transformations: 
\begin{align}\label{demeq}
   \mu^\otimes_{r s}; \mathsf{p}_{r,s} = (\mathsf{p}_{r,s} \otimes \mathsf{p}_{r,s}); \mu^\otimes_r; \oc_r(\mu^\otimes_s) && \mu^I_{r s}; \mathsf{p}_{r,s} =  \mu^I_r; \oc_r(\mu^I_s) && \mu^\otimes_1; \mathsf{d} = \mathsf{d} \otimes \mathsf{d} && \mu^I_1; \mathsf{d} = 1 
\end{align}
\item  $\mathsf{c}$ and $\mathsf{d}$ are $R$-graded monoidal transformations: 
\begin{equation}\label{DEmeq}\begin{gathered}
  \mu^\otimes_{r+s}; \mathsf{c}_{r,s} = (\mathsf{c}_{r,s} \otimes \mathsf{c}_{r,s});(1 \otimes \sigma \otimes 1); (\mu^\otimes_r \otimes \mu^\otimes_s)\\
  \mu^I_{r+s}; \mathsf{c}_{r,s} =  (\mu^I_r \otimes \mu^I_s) \quad \quad \quad \mu^\otimes_0; \mathsf{w} = \mathsf{w} \otimes \mathsf{w} \quad \quad \quad \mu^I_0; \mathsf{w} = 1 
\end{gathered}\end{equation}
\item $\mathsf{c}$ and $\mathsf{d}$ are $R$-graded $\oc$-coalgebra morphisms:  
\begin{align} \label{DE!eq}
    \mathsf{p}_{r,s+t}; \oc_r(\mathsf{c}_{s,t}) = \mathsf{c}_{r  s,r t}; (\mathsf{p}_{r,s} \otimes \mathsf{p}_{r,t}); \mu^\otimes_r && \mathsf{p}_{r,0}; \oc_r(e) = \mathsf{w}; \mu^I_r
\end{align}
  \end{enumerate}
\end{definition}

Every non-graded (monoidal) coalgebra modality is trivially graded over the zero semiring:

\begin{example} Let $\mathsf{0}$ be the zero semiring. Then a $\mathsf{0}$-graded (monoidal) coalgebra modality is precisely a non-graded (monoidal) coalgebra modality. 
\end{example}

Here is now a running example we will use throughout the paper of a graded coalgebra modality on the category of sets and relations $\mathsf{REL}$. Recall that $\mathsf{REL}$ is the category whose objects are sets $X$ and whose maps $S: X \to Y$ are subsets $S \subseteq X \times Y$, and also that $\mathsf{REL}$ is a symmetric monoidal category where the monoidal product is given by the Cartesian product $X \otimes Y = X \times Y$ (but this not the categorical product) and the monoidal unit is a chosen singleton $I = \lbrace \ast \rbrace$.  

\begin{example}\label{ex:REL!} Let $\mathbb{N}$ be the semiring of natural numbers. Finite multisets induce an $\mathbb{N}$-graded monoidal coalgebra modality on $\mathsf{REL}$. For a function $f: X \to \mathbb{N}$, define its support as $\mathsf{supp}(f) = \lbrace x \in X ~\vert~ f(x) \neq 0 \rbrace$. If the cardinality of the support of $f$ is finite, $\vert \mathsf{supp}(f) \vert < \infty$, define its degree as $\mathsf{deg}(f) := \sum_{x \in \mathsf{supp}(f)} f(x)$. For every set $X$ and every $n \in \mathbb{N}$, define the set $\oc_n X$ as follows: 
\[\oc_n X = \lbrace f: X \to \mathbb{N} ~\vert~ \vert \mathsf{supp}(f) \vert < \infty \text{ and } \mathsf{deg}(f) = n \rbrace\] 
To define the rest of the structure, we need to first define a few special finite multisets. Define $0: X \to \mathbb{N}$ which maps everything to zero, $0(x) =0$, and note that $\mathsf{deg}(0) = 0$, so $0 \in \oc_0 X$. For each $x \in X$, define $\overline{\mathsf{d}}_x: X \to R$ as $\overline{\mathsf{d}}_x(y) = 0$ if $x \neq y$ and $\overline{\mathsf{d}}_x(x) = 1$, and note that $\mathsf{deg}(\overline{\mathsf{d}}_x) =1$, so $\overline{\mathsf{d}}_x \in \oc_1 X$. For parallel functions $f,g: X \to \mathbb{N}$, define their sum $f+g: X \to \mathbb{N}$ pointwise $(f+g)(x) = f(x) +g(x)$, and note that if $f$ and $g$ have finite support then $\mathsf{deg}(f+g) = \mathsf{deg}(f) + \mathsf{deg}(g)$. We now define the following relations: 
\begin{gather*}
\mathsf{d} = \lbrace (\overline{\mathsf{d}}_x, x) ~\vert~ x \in X \rbrace \subseteq \oc_1 X \times X \quad \quad \quad  
\mathsf{p}_{n,m} = \left\lbrace (f, F) ~\vert~ f, F \text{ s.t. } f = \sum_{f_i \in \mathsf{supp}(F)} f_i \right \rbrace \subseteq \oc_{nm} X \times \oc_n \oc_m X \\
\mathsf{c}_{n,m} =\lbrace (f_1, (f_2,f_3) ) ~\vert~ f_i \text{ s.t. } f_1 = f_2 + f_3 \rbrace \subseteq \oc_{n+m} X \times (\oc_n X \times \oc_m X) \quad \quad \mathsf{w} = \lbrace (0, \ast) \rbrace \subseteq \oc_0 X \times \lbrace \ast \rbrace \\
\mu^\otimes_n = \lbrace ((f_1,f_2), f_3) ~\vert~ f_i \text{ s.t. } f_1 = \sum_{y \in Y} f_3(-,y) \text{ and } f_2 = \sum_{x \in X} f_3(x,-) \rbrace \subseteq (\oc_n X \times \oc_n Y) \times \oc_n(X \times Y)  \\
\mu^I_n = \lbrace (\ast, n) ~\vert~ n \in \mathbb{N} \rbrace \subseteq \lbrace \ast \rbrace \times \oc_n \lbrace \ast \rbrace 
\end{gather*}
Then $(\oc, \mathsf{p}, \mathsf{d}, \mathsf{c}, \mathsf{w}, \mu^\otimes, \mu^I)$ is an $\mathbb{N}$-graded monoidal coalgebra modality on $\mathsf{REL}$ (which follows from the results in Section \ref{sec:sympowers}). 
\end{example}

Other examples of graded monoidal coalgebra modalities can be found in \cite{katsumata2018double}. In Section \ref{sec:sympowers}, we will review how symmetric powers always provide an $\mathbb{N}$-graded monoidal coalgebra modality. 

\section{Graded Differential Categories}\label{sec:gdiffcat}

In this section, we introduce the main novel concept of this paper: graded differential categories, which are graded versions of Blute, Cockett, and Seely's differential categories \cite{blute2006differential}. In particular, this means we will provide a graded version of the \emph{deriving transformation} approach to differential categories. The associated sequent calculus is an extension of $\mathsf{GLL}$ in which there are appropriate notions of zero proofs and sums of proofs, and with an extra differentiation rule ($\partial$): 
\begin{align}
    \begin{prooftree}
\hypo{\Gamma \vdash \oc_{r}A}
\hypo{\Delta \vdash A}
\infer2[$\partial_r$]{\Gamma, \Delta \vdash \oc_{r+1}A}
\end{prooftree}
\end{align}
satisfying the equivalent cut elimination steps for the equations provided below. 

Categorically speaking, sums and zeroes are provided by an additive enrichment of the underlying category. So an \textbf{additive symmetric monoidal category}  \cite[Def 3]{Blute2019} is a symmetric monoidal category $\mathcal{L}$ which is enriched over commutative monoids, that is, each homset $\mathcal{L}(A,B)$ is a commutative monoid with a binary operation $+$, so we can sum parallel maps $f+g$, and zero element $0: A \to B$, such that composition and the monoidal product preserve this additive structure. 

\begin{definition}\label{def:diffcat} An \textbf{$R$-graded differential modality} on an additive symmetric monoidal category $\mathcal{L}$ is a tuple $(\oc, \mathsf{p}, \mathsf{d}, \mathsf{c}, \mathsf{w}, \partial)$ consisting of an $R$-graded coalgebra modality $(\oc, \mathsf{p}, \mathsf{d}, \mathsf{c}, \mathsf{w})$ and a \textbf{deriving transformation}, that is, a family of natural transformations $\partial$ where for every $r \in R$ we have a natural transformation $\partial_r: \oc_r A \otimes A \to \oc_{r+1} A$ and such that the following equalities hold for all $r,s \in R$: 
\begin{enumerate}[{\em (i)}]
\item \textbf{Linear Rule:} $\partial_0; \mathsf{d} = \mathsf{w} \otimes 1$
\item \textbf{Product Rule:} $\partial_{r+s+1}; \mathsf{c}_{r+1, s+1} = (\mathsf{c}_{r+1,s} \otimes 1);(1 \otimes \partial_s) +  (\mathsf{c}_{r,s+1} \otimes 1); (1 \otimes \sigma); (\partial_r \otimes 1)$
\item \textbf{Chain Rule:} $\partial_{rs+r+s};\mathsf{p}_{r+1,s+1} = (\mathsf{c}_{rs+r,s} \otimes 1);(\mathsf{p}_{r,s+1} \otimes \partial_s); \partial_r$
\item \textbf{Symmetry Rule:} $(1 \otimes \sigma);(\partial_r \otimes 1);\partial_r = (\partial_r \otimes 1);\partial_r$
  \end{enumerate}
  An \textbf{$R$-graded differential category} is an additive symmetric monoidal category with a chosen $R$-graded differential modality. 
\end{definition}

The intuition in an $R$-graded differential category is that a non-linear map $\oc_{r+1} A \to B$ should now be interpreted as a \emph{differentiable} map of degree $r+1$. In some models, this means that after differentiating $r+1$ times we get $0$, but this may not always be the case. The derivative of $f: \oc_{r+1} A \to B$ is defined by pre-composing by the deriving transformation, $\mathsf{D}[f] := \partial_r; f : \oc_{r} A \otimes A \to \oc_{r+1} A \to B$, which is interpreted as a non-linear map of degree $r$ in its first argument $A$ and linear in its second argument $A$. The linear rule says that the derivative of a linear map, viewed as a differentiable map of degree $1$, is essentially just itself. The product rule and the chain rule are analogues of their namesakes from classical differential calculus, which tell us how to differentiate products and compositions of non-linear maps respectively. The symmetry rule gives us the symmetry of partial derivatives. To better understand these rules, consider the case $R = \mathbb{N}$ and view $\oc_n A \to B$ as just a homogeneous monomial function of degree $n$, say $x^n$. We then invite the readers to check these formulas for themselves with polynomials in mind. In Section \ref{sec:sympowers}, we will make the intuition of differentiating polynomials precise. 

Readers familiar with differential categories will note that a graded version of the so-called constant rule is missing, that is, $\partial; \mathsf{w} = 0$. In \cite[Lem 3]{Blute2019}, it was shown that the non-graded constant rule follows from the naturality of $\partial$ and $\mathsf{w}$. The same is true in the graded version. Indeed, first note that for $\partial_r; \mathsf{w}$ to be well-typed we need that $r+1=0$, which means that $r=-1$ and that $R$ is, in fact, a ring. Then if $-1$ exists in $R$, we indeed have the constant rule that is $\partial_{-1}; \mathsf{w} = 0$, which entails that the derivative of points (which we can also understand as constant maps), viewed as a differential map of degree $0$, is zero. 

\begin{lemma} Let $R$ be a ring. Then in a $R$-graded differential category, the following equality holds: 
\begin{enumerate}[{\em (i)}]
\item \textbf{Constant Rule:} $\partial_{-1}; \mathsf{w} = 0$
  \end{enumerate}
\end{lemma}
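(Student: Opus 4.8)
The plan is to obtain the constant rule purely from naturality, exactly mirroring the non-graded argument of \cite[Lem 3]{Blute2019}. The preliminary observation, already noted in the surrounding text, is that since $R$ is a ring we have $-1 \in R$, so that $\partial_{-1}: \oc_{-1}A \otimes A \to \oc_{0}A$ composes with $\mathsf{w}: \oc_0 A \to I$ and the claimed equation $\partial_{-1}; \mathsf{w} = 0: \oc_{-1}A \otimes A \to I$ is well-typed. The whole proof is then a single naturality manipulation applied to the zero endomorphism.

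First I would instantiate the naturality of the deriving transformation at the zero map $0: A \to A$. Since $\partial$ is a family of natural transformations, for every map $h: A \to B$ the naturality square for $\partial_{-1}$ reads $(\oc_{-1}(h) \otimes h); \partial_{-1} = \partial_{-1}; \oc_0(h)$; taking $h = 0$ yields $(\oc_{-1}(0) \otimes 0); \partial_{-1} = \partial_{-1}; \oc_0(0)$. Next I would post-compose both sides with $\mathsf{w}$ and invoke the naturality of $\mathsf{w}: \oc_0 A \to I$, which (because the codomain $I$ is fixed) reads $\oc_0(h); \mathsf{w} = \mathsf{w}$ for every $h$, so in particular $\oc_0(0); \mathsf{w} = \mathsf{w}$. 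The right-hand side therefore collapses to $\partial_{-1}; \oc_0(0); \mathsf{w} = \partial_{-1}; \mathsf{w}$, which is precisely the expression we want to show vanishes.

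It then remains to evaluate the left-hand side. Here I would use that $\mathcal{L}$ is an additive symmetric monoidal category, so the monoidal product preserves the additive structure and is additive in each argument; consequently any tensor one of whose factors is the zero map is itself the zero map, giving $\oc_{-1}(0) \otimes 0 = 0$. Hence the left-hand side is $(\oc_{-1}(0) \otimes 0); \partial_{-1}; \mathsf{w} = 0; \partial_{-1}; \mathsf{w} = 0$, and combining the two computations gives $\partial_{-1}; \mathsf{w} = 0$, as desired.

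I do not expect a genuine obstacle in this argument: the content is entirely formal, resting only on the naturality of $\partial$ and $\mathsf{w}$ together with bilinearity of $\otimes$. The only points requiring care are bookkeeping the grades so that each composite type-checks (which is exactly what forces $r = -1$ and hence the ring hypothesis on $R$), and recalling that additivity of the monoidal product forces $f \otimes 0 = 0$ rather than treating it as obvious.
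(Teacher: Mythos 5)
Your proof is correct and follows essentially the same route as the paper's: both instantiate the naturality of $\partial$ and of $\mathsf{w}$ at the zero map and then use that the monoidal product in an additive symmetric monoidal category kills zero maps ($f \otimes 0 = 0$) to collapse $\partial_{-1};\mathsf{w}$ to $0$. Your write-up is just a more carefully spelled-out version of the paper's one-line computation (and in fact states the naturality square for $\partial$ with the correct grading $\partial_{-1};\oc_0(h)$, which the paper's prose slightly mistypes).
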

\begin{proof} By naturality of $\mathsf{w}$ and $\partial$, i.e $\mathsf{w} = \oc_0(f); \mathsf{w}$ and $\partial_r;\oc_r(f) = (\oc_r(f) \otimes f);\partial_r$, and that the monoidal product preserves zero, i.e.\! $f \otimes 0 =0$, we easily compute that: 
\[ \partial_{-1}; \mathsf{w} = \partial_{-1};\oc_0 (0); \mathsf{w} = (\oc_{-1}(0) \otimes 0); \partial_{-1}; \mathsf{w} = 0 \]
So we conclude that $\partial_{-1}; \mathsf{w} = 0$. 
\end{proof}

We now define deriving transformations for graded monoidal coalgebra modalities by also requiring that a graded version of Fiore's monoidal rule from \cite{fiore2007differential} holds. Again, readers familiar with differential categories will note that in the non-graded case, the monoidal rule can be proved from the other deriving transformation axioms, and is, in fact, equivalent to the product rule \cite[Prop 7]{Blute2019}. However, we claim that in the graded case, this is no longer true. This is because the proof in the non-graded uses a construction that is no longer well-typed in the graded setting. We will discuss more about this in the next section.  

\begin{definition}\label{def:diffmoncat} An \textbf{$R$-graded monoidal differential modality} on an additive symmetric monoidal category $\mathcal{L}$ is a tuple $(\oc, \mathsf{p}, \mathsf{d}, \mathsf{c}, \mathsf{w}, \mu^\otimes, \mu^I, \partial)$ consisting of an $R$-graded monoidal coalgebra modality and a \textbf{monoidal deriving transformation}, which is a deriving transformation $\partial$ which also satisfies the following equality for all $r \in R$: 
\begin{enumerate}[{\em (i)}]
\setcounter{enumi}{4}
\item \textbf{Monoidal Rule:} $(1 \otimes \partial_r); \mu^\otimes_{r+1} = (\mathsf{c}_{r+1} \otimes 1 \otimes 1);(1 \otimes \mathsf{d} \otimes 1 \otimes 1);(1 \otimes \sigma \otimes 1);(\mu^\otimes_r \otimes 1);\partial_r$
  \end{enumerate}
\end{definition}

We conclude this section with our running example. 

\begin{example} A $\mathsf{0}$-graded (monoidal) differential modality is precisely the same as a non-graded (monoidal) differential modality. Therefore, a $\mathsf{0}$-graded differential category is precisely a non-graded differential category. 
\end{example}

\begin{example}\label{ex:RELd} $\mathsf{REL}$ is an additive symmetric monoidal category where the sum of relations is given by union, $+ = \sqcup$, and the zero map is given by the empty relation, $0 = \emptyset$. Define the deriving transformation for the $\mathbb{N}$-graded monoidal coalgebra modality from Ex \ref{ex:REL!} as the following relation: 
\begin{align*}
    \partial_n = \lbrace ((f, x), f + \overline{\mathsf{d}}_x ) ~\vert~ x\in X, f \in \oc_n X \rbrace \subseteq (\oc_n X \times X) \times \oc_{n+1} X 
\end{align*}
Then $(\oc, \mathsf{p}, \mathsf{d}, \mathsf{c}, \mathsf{w}, \mu^\otimes, \mu^I, \partial)$ is an $\mathbb{N}$-graded monoidal differential modality and so $\mathsf{REL}$ is an $\mathbb{N}$-graded differential category (which again follows from the results in Section \ref{sec:sympowers}).  
\end{example}

\section{Graded Additive Bialgebra Modalities and Seely Isomorphisms}\label{sec:gbialg}

In order to properly define a graded version of a codereliction, we first need to discuss graded versions of cocontraction and coweakening. In the sequent calculus, these are two extra rules for the graded exponential modality, the \textbf{cocontraction} rule ($\overline{\mathsf{c}}$) and the \textbf{coweakening} rule ($\overline{\mathsf{w}}$): 
\begin{align}
\begin{prooftree}
\hypo{\Gamma \vdash \oc_{r}A}
\hypo{\Delta \vdash \oc_{s}A}
\infer2[$\overline{\mathsf{c}}_{r,s}$]{\Gamma, \Delta \vdash \oc_{r+s}A}
\end{prooftree}
&&
\begin{prooftree}
\infer0[$\overline{\mathsf{w}}$]{\vdash \oc_{0}A}
\end{prooftree}
\end{align}
In categorical terms, the coweakening rule is given by a natural transformation of type $\overline{\mathsf{w}}: I \to \oc_0 A$, while the cocontraction rule is given by an $R$-indexed family of natural transformations $\overline{\mathsf{c}}_{r,s}: \oc_r A \otimes \oc_s A \to \oc_{r+s} A$. Note that $\overline{\mathsf{c}}$ and $\overline{\mathsf{w}}$ are of dual types of $\mathsf{c}$ and $\mathsf{w}$. Together, these four natural transformations subject to equations that are presented below make $\oc_r A$ into an $R$-graded bimonoid (the generalization of a bialgebra in a symmetric monoidal category). 

In classical algebra, the notion of graded bialgebra is usually defined simply as being graded over a monoid $R$, typically $R = \mathbb{N}$, and as a bialgebra $A$ which can be decomposed as a direct sum $A = \bigoplus_{r\in R}A_{r}$ such that the multiplication and comultiplication are compatible in appropriate ways with this grading \cite[Chap XI]{sweedler1969hopf}. In our context, $R$ is a semiring and we replace the underlying object $A$ by the family $(A_{r})_{r \in R}$ of its components without assuming biproducts. The compatibility between the comultiplication and the multiplication now uses a sum indexed by elements of the semiring that verify some equations. This sum needs to be finite, and in order for this to hold, we must require that there are only a finite number of elements which sum to $r$. More precisely:

\begin{definition} A semiring $R$ is \textbf{finite additive split} (f.a.s) if the following conditions hold: 
\begin{enumerate}[{\em (i)}]
\item For every $r \in R$, the set $+^{-1}(r) = \lbrace (s,t) ~\vert~ \forall s,t \in R \text{ s.t. } s+t = r \rbrace$ is finite. In other words, for every $r \in R$, there is only a finite number of elements $s,t\in R$ such that $s+t=r$.
\item We have that $+^{-1}(0) = \lbrace (0,0)  \rbrace$ and $+^{-1}(1) = \lbrace (1,0), (0,1) \rbrace$. In other words, if $s+t=0$ then $s=t=0$, and if $s+t=1$, then either $s=0$ and $t=1$, or $s=1$ and $t=0$. 
\item $R$ is cancellative, that is, if $r+s = r+t$ then $s=t$. 
\end{enumerate}
\end{definition}

The axioms for a f.a.s.\! semiring are similar to those of a multiplicity semiring \cite[Sec 2.1]{carraro2010exponentials}, though the axioms here are slightly stronger. Examples of f.a.s.\! semirings include $\mathsf{0}$, $\mathbb{N}$, $\mathbb{N} \times \mathbb{N}$, $\mathbb{N}[x]$, etc.  

We now provide the definition of a graded additive bialgebra modality, which is the graded version of \cite[Def 5]{Blute2019}. Notice that the following definition does not assume that the underlying graded coalgebra modality is monoidal (this will come later).

\begin{definition}\label{def:addbialg} Let $R$ be a f.a.s.\! semiring. An \textbf{$R$-graded additive bialgebra modality} on an additive symmetric monoidal category $\mathcal{L}$ is a tuple $(\oc,\mathsf{p},\mathsf{d},\mathsf{c}, \mathsf{w}, \overline{\mathsf{c}}, \overline{\mathsf{w}})$ consisting of an $R$-graded coalgebra modality $(\oc,\mathsf{p},\mathsf{d},\mathsf{c}, \mathsf{w})$, a family of natural transformation $\overline{\mathsf{c}}$ where for every $r,s \in R$ we have a natural transformation of type $\overline{\mathsf{c}}_{r,s}: \oc_{r}A \otimes \oc_{s}A \to \oc_{r+s}A$, and a natural transformation $\overline{\mathsf{w}} : I \rightarrow \oc_{0}A$ such that the following equalities hold: 
\begin{enumerate}[{\em (i)}]
\item $(\oc_{\_}A, \mathsf{c}, \mathsf{w})$ is an $R$-graded cocommutative comonoid: 
\begin{align} \label{monoideq}
 (\overline{\mathsf{c}}_{r,s} \otimes 1);\overline{\mathsf{c}}_{r+s,t} = (1 \otimes \overline{\mathsf{c}}_{s,t});\overline{\mathsf{c}}_{r,s+t} && (\overline{\mathsf{w}} \otimes 1);\overline{\mathsf{c}}_{0,r}=1=(1 \otimes \overline{\mathsf{w}}); \overline{\mathsf{c}}_{r,0} && \sigma;\overline{\mathsf{c}}_{s,r} = \overline{\mathsf{c}}_{r,s}
\end{align}
\item $(\oc_{\_}A, \overline{\mathsf{c}}, \overline{\mathsf{w}}, \mathsf{c}, \mathsf{w})$ is an $R$-graded bimonoid in the sense that: 
\begin{equation}\label{bialgeq}\begin{gathered} 
\overline{\mathsf{c}}_{r,s}; \mathsf{c}_{t,u} = \sum\limits_{\substack{a,b,c,d \in R \\ a+b = r \quad c+d = s \\ a+c = t \quad b+d = u}} (\mathsf{c}_{a,b} \otimes \mathsf{c}_{c,d}); (1 \otimes \sigma \otimes 1); \overline{\mathsf{c}}_{a,c} \otimes \overline{\mathsf{c}}_{b,d} \quad \quad \quad \text{ where } r+s = t+u \\
   \overline{\mathsf{w}};\mathsf{w} = 1 \quad \quad \quad \overline{\mathsf{w}}; \mathsf{c}_{0,0} = \overline{\mathsf{w}} \otimes \overline{\mathsf{w}} \quad \quad \quad \overline{\mathsf{c}}_{0,0}; \mathsf{w}= \mathsf{w} \otimes \mathsf{w}
\end{gathered}\end{equation}
\item $\mathsf{d}$ is compatible with $\overline{\mathsf{c}}$ in the sense that: 
\begin{align}\label{dceq}
    \overline{\mathsf{c}}_{r,s}; \mathsf{d} = \delta_{r,0} \cdot (\mathsf{w} \otimes \mathsf{d}) + \delta_{0,s} \cdot (\mathsf{d} \otimes \mathsf{w}) && \text{ where } r+s = 1
\end{align}
where $\delta_{x,y} \in R$ is Kronecker delta, that is, $\delta_{x,y}=1$ if $x=y$ and $\delta_{x,y}=0$ if $x\neq y$. 
\item For every map $f: A \to B$, the following equalities hold: 
\begin{align}\label{!feq}
    \oc_r(f+g)= \sum\limits_{\substack{s,t \in R \\ s+t=r}} \mathsf{c}_{s,t}; (\oc_s (f) \otimes \oc_t(g)); \overline{\mathsf{c}}_{s,t} && \oc_r(0) = \begin{cases}  \mathsf{w};\overline{\mathsf{w}} & \text{ if } r=0 \\
0 & \text{ o.w. }
\end{cases}
\end{align}
\end{enumerate}
\end{definition}

Let us explain these axioms a bit more. The equations of (\ref{bialgeq}) are generalizations of those for $\mathbb{N}$-graded bialgebras \cite[Lem 4.1]{ardizzoni2012associated} but for f.a.s.\! semirings -- which are well-defined since the sum needs to be finite. Of course, if we are working in an additive symmetric monoidal category where infinite sums are well-defined, then the assumption that $R$ is an f.a.s.\! could, in theory, be dropped. When $R=\mathsf{0}$, there is only one term in the sum, so we re-obtain the non-graded bimonoid axiom. The equations of (\ref{!feq}) say that $\oc$ relates the additive enrichment to the graded bimonoid convolution. Again when $R=\mathsf{0}$, there is only one term in the sum, so we re-obtain the bimonoid convolution axiom in the non-graded case. At first glance, (\ref{dceq}) might seem unproperly typed since $\mathsf{w} \otimes \mathsf{d}: \oc_0 A \otimes \oc_1 A \to A$ and $\mathsf{d} \otimes \mathsf{w}: \oc_1 A \otimes \oc_0 A \to A$. However, when $R \neq \mathsf{0}$, $\delta_{r,0} \cdot (\mathsf{w} \otimes \mathsf{d}) + \delta_{0,s} \cdot (\mathsf{d} \otimes \mathsf{w})$ is either $\mathsf{w} \otimes \mathsf{d}$ when $r=0$ and $s=1$, or $\mathsf{d} \otimes \mathsf{w}$ when $r=1$ or $s=0$, or $0$ when $r$ and $s$ are neither $0$ or $1$. When $R = \mathsf{0}$, then $0=1$ and therefore we obtain the sum $\mathsf{w} \otimes \mathsf{d} + \mathsf{d} \otimes \mathsf{w}$, which is the axiom for the non-graded case. 

\begin{example} A $\mathsf{0}$-graded additive bialgebra modality is precisely a non-graded additive bialgebra modality. 
\end{example}

\begin{example}\label{ex:RELbialg} Since $\mathbb{N}$ is a f.a.s. semiring, define the following relations: 
\begin{gather*}
\overline{\mathsf{c}}_{n,m} =\lbrace ((f_1,f_2) ,f_3 ) ~\vert~ f_i \text{ s.t. } f_1 + f_2 = f_3 \rbrace \subseteq (\oc_n X \times \oc_m X) \times \oc_{n+m} X \quad \quad \overline{\mathsf{w}} = \lbrace (\ast, 0) \rbrace \subseteq  \lbrace \ast \rbrace \times \oc_0 X 
\end{gather*}
Then $(\oc,\mathsf{p},\mathsf{d},\mathsf{c}, \mathsf{w}, \overline{\mathsf{c}}, \overline{\mathsf{w}})$ is an $\mathbb{N}$-graded additive bialgebra modality on $\mathsf{REL}$ (which again follows from the results in Section \ref{sec:sympowers}).  
\end{example}

When working in the additive fragment of $\mathsf{GLL}$, it turns out that we obtain graded versions of the Seely isomorphisms. In a setting with sums of proofs, note that the additive disjunction $\oplus$ and the additive conjunction $\with$ are actually the same, so $\oplus = \with$ \cite{fiore2007differential}. Thus writing $\oplus$ for both, recall the following rules: 
\begin{gather*}
\begin{prooftree}
\hypo{\Gamma, A \vdash C}
\hypo{\Gamma, B \vdash C}
\infer2[$\oplus_{l}$]{\Gamma, A \oplus B \vdash C}
\end{prooftree}
\quad \quad 
\begin{prooftree}
\hypo{\Gamma \vdash A}
\infer1[$\oplus_{r}^{1}$]{\Gamma \vdash A \oplus B}
\end{prooftree}
\quad \quad 
\begin{prooftree}
\hypo{\Gamma \vdash B}
\infer1[$\oplus_{r}^{2}$]{\Gamma \vdash A \oplus B}
\end{prooftree}
\\
\begin{prooftree}
\hypo{\Gamma \vdash A}
\hypo{\Gamma \vdash B}
\infer2[$\with_{r}$]{\Gamma \vdash A \oplus B}
\end{prooftree}
\quad \quad 
\begin{prooftree}
\hypo{\Gamma, A \vdash C}
\infer1[$\with_{l}^{1}$]{\Gamma, A \oplus B \vdash C}
\end{prooftree}
\quad \quad  
\begin{prooftree}
\hypo{\Gamma, B \vdash C}
\infer1[$\with_{l}^{2}$]{\Gamma, A \oplus B \vdash C}
\end{prooftree}
\end{gather*}
Then using these rules, in addition to (co)contraction, dereliction and promotion we derive:  
\begin{align*}
\begin{prooftree}
\infer0[$\mathsf{ax}$]{A \vdash A}
\infer1[$\with_{l}^{1}$]{A \oplus B \vdash A}
\infer1[$\mathsf{d}$]{\oc_{1}(A \oplus B) \vdash A}
\infer1[$\mathsf{prom}_{s,1}$]{\oc_{s}(A \oplus B) \vdash \oc_{s}A}
\infer0[$\mathsf{ax}$]{B \vdash B}
\infer1[$\with_{l}^{2}$]{A \oplus B \vdash B}
\infer1[$\mathsf{d}$]{\oc_{1}(A \oplus B) \vdash B}
\infer1[$\mathsf{prom}_{t,1}$]{\oc_{t}(A \oplus B) \vdash \oc_{t}B}
\infer2[$\otimes_{r}$]{\oc_{s}(A \oplus B), \oc_{t}(A \oplus B) \vdash \oc_{s}A \otimes \oc_{t}B}
\infer1[$\mathsf{c}_{s,t}$]{\oc_{r}(A \oplus B) \vdash \oc_{s}A \otimes \oc_{t}B}
\infer1[$\with_{r}$]{\oc_{r}(A \oplus B) \vdash \bigoplus\limits_{\substack{s,t \in R \\ s+t =r}}\oc_{s}A \otimes \oc_{t}B}
\end{prooftree}
&& 
\begin{prooftree}
\infer0[$\mathsf{ax}$]{A \vdash A}
\infer1[$\oplus_{r}^{1}$]{A \vdash A \oplus B}
\infer1[$\mathsf{d}$]{\oc_{1}A \vdash A \oplus B}
\infer1[$\mathsf{prom}_{s,1}$]{\oc_{s}A \vdash \oc_{s}(A \oplus B)}
\infer0[$\mathsf{ax}$]{B \vdash B}
\infer1[$\oplus_{r}^{2}$]{B \vdash A\oplus B}
\infer1[$\mathsf{d}$]{\oc_{1}B \vdash A \oplus B}
\infer1[$\mathsf{prom}_{t,1}$]{\oc_{t}B \vdash \oc_{t}(A \oplus B)}
\infer2[$\overline{\mathsf{c}}_{s,t}$]{\oc_{s}A, \oc_{t}B \vdash \oc_{r}(A \oplus B)}
\infer1[$\otimes_{l}$]{\oc_{s}A \otimes \oc_{t}B \vdash \oc_{r}(A \oplus B)}
\infer1[$\oplus_{l}$]{\bigoplus\limits_{\substack{s,t \in R \\ s+t =r}} \oc_{s}A \otimes \oc_{t}B \vdash \oc_{r}(A \oplus B)}
\end{prooftree}
\end{align*}
where the value $r$ is fixed and the last rules are obtained by repeating a finite number of times the corresponding binary rule for every pair of elements $s,t \in R$ such that $s+t=r$ -- which is indeed a finite number of times since $R$ is f.a.s.\! These two derivations are inverses of each other. 

Categorically speaking, recall that the additive fragment, in this case, is captured by finite biproducts \cite[Sec 2]{fiore2007differential}. So for a category $\vert \mathcal{L} \vert$ with finite biproducts, we denote the biproduct as $\oplus$, the projections by ${\pi_j: A_0 \oplus \hdots \oplus A_n \to A_j}$, the injections by $\iota_j: A_j \to A_0 \oplus \hdots \oplus A_n$, and the zero object by $\mathsf{0}$. So we will now prove that the above derivations correspond to isomorphisms and obtain: 
\begin{align}
    \oc_r(A \oplus B) \cong \bigoplus\limits_{\substack{s,t \in R \\ s+t =r}}  \oc_s A \otimes \oc_t B && \oc_0 \mathsf{0} \cong I
\end{align}
We are making a slight abuse of notation here and allow for indexing over the set $+^{-1}(r)$, which for an f.a.s.\! semiring is finite. We denote by $\iota_{s,t}$ and $\pi_{s,t}$ the injection and projection into the $(s,t) \in +^{-1}(r)$ term of the biproduct. Also recall that for biproducts, the pairing and copairing operations (so $\with_r$ and $\oplus_l$) can be defined using sums. 

\begin{theorem}\label{thm:Seely} Let $R$ be a f.a.s.\! semiring and let $(\oc,\mathsf{p},\mathsf{d},\mathsf{c}, \mathsf{w}, \overline{\mathsf{c}}, \overline{\mathsf{w}})$ be an $R$-graded additive bialgebra modality on an additive symmetric monoidal category $\mathcal{L}$ with finite biproducts. Define the natural transformations $\chi^\otimes_r: \oc_r(A \oplus B) \to \bigoplus\limits_{\substack{s,t \in R \\ s+t =r}}  \oc_s A \otimes \oc_t B$ and $\chi^I: \oc_0 \mathsf{0} \to I$ respectively as follows: 
\begin{gather*}
\chi^\otimes_r :=\begin{array}[c]{c}
\sum\limits_{\substack{s,t \in R \\ s+t =r}}
   \end{array} \left( \begin{array}[c]{c} \xymatrixcolsep{5pc}\xymatrix{ \oc_r(A \oplus B) \ar[r]^-{\mathsf{c}_{s,t}} & \oc_s (A \oplus B) \otimes \oc_t (A \oplus B) \ar[d]^-{\oc_s(\pi_0) \otimes \oc_t(\pi_1)} & \\ &  \oc_s A \otimes \oc_t B \ar[r]^-{\iota_{s,t}} &\bigoplus\limits_{\substack{s,t \in R \\ s+t =r}}  \oc_s A \otimes \oc_t B
  } \end{array}\right) \\
  \chi^I := \mathsf{w}
\end{gather*} 
Then $\chi^\otimes_r$ and $\chi^I$ are natural isomorphisms with inverses defined as follows: 
\begin{gather*}
  {\chi^\otimes_{r}}^{-1} :=\begin{array}[c]{c}
\sum\limits_{\substack{s,t \in R \\ s+t =r}}
   \end{array} \left(\begin{array}[c]{c} \xymatrixcolsep{5pc}\xymatrix{ \bigoplus\limits_{\substack{s,t \in R \\ s+t =r}}  \oc_s A \otimes \oc_t B \ar[r]^-{\pi_{s,t}} & \oc_s A \otimes \oc_t B \ar[d]^-{\oc_s(\iota_0) \otimes \oc_t(\iota_1)} & \\ & \oc_{s}(A \oplus B) \otimes \oc_{t}(A \oplus B)  \ar[r]^-{\overline{\mathsf{c}}_{s,t}} & \oc_{r} (A \oplus B)
  }  \end{array}\right) \\
  {\chi^I}^{-1} := \overline{\mathsf{w}} 
\end{gather*}  
\end{theorem}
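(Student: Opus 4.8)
The plan is to verify that $\chi^\otimes_r$ and $\chi^I$ are two-sided inverses of the proposed maps, treating the unit isomorphism $\chi^I$ and the main isomorphism $\chi^\otimes_r$ separately. The $\chi^I$ case is immediate: the equation ${\chi^I}^{-1};\chi^I = \overline{\mathsf{w}};\mathsf{w} = 1_I$ is literally one of the graded bimonoid axioms in (\ref{bialgeq}), while for $\chi^I;{\chi^I}^{-1} = \mathsf{w};\overline{\mathsf{w}}$ I would use that $\mathsf{0}$ is a zero object, so $1_{\mathsf{0}} = 0$, and then the second equation of (\ref{!feq}) gives $\mathsf{w};\overline{\mathsf{w}} = \oc_0(0) = \oc_0(1_{\mathsf{0}}) = 1_{\oc_0\mathsf{0}}$ by functoriality of $\oc_0$.

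For $\chi^\otimes_r$, writing $\pi_0,\pi_1$ and $\iota_0,\iota_1$ for the projections and injections of $A \oplus B$, one of the two composites is routine and the other is the real work. I would first compute $\chi^\otimes_r;{\chi^\otimes_r}^{-1}$: expanding both sums and using the orthogonality relations $\iota_{s,t};\pi_{a,b} = \delta_{(s,t),(a,b)}$ of the outer biproduct collapses the double sum to its diagonal, leaving $\sum_{s+t=r}\mathsf{c}_{s,t};(\oc_s(\pi_0;\iota_0) \otimes \oc_t(\pi_1;\iota_1));\overline{\mathsf{c}}_{s,t}$ after applying functoriality. Since $\pi_0;\iota_0 + \pi_1;\iota_1 = 1_{A\oplus B}$, the first equation of (\ref{!feq}), taken with $f = \pi_0;\iota_0$ and $g = \pi_1;\iota_1$, identifies this sum with $\oc_r(\pi_0;\iota_0 + \pi_1;\iota_1) = \oc_r(1) = 1_{\oc_r(A\oplus B)}$, so this composite is the identity.

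The main obstacle is the reverse composite ${\chi^\otimes_r}^{-1};\chi^\otimes_r$, which I would check componentwise: since the target is a biproduct, it suffices to show $\iota_{a,b};({\chi^\otimes_r}^{-1};\chi^\otimes_r);\pi_{s,t} = \delta_{(a,b),(s,t)}\cdot 1$ for all $(a,b),(s,t) \in +^{-1}(r)$. By orthogonality this component equals $(\oc_a(\iota_0) \otimes \oc_b(\iota_1));\overline{\mathsf{c}}_{a,b};\mathsf{c}_{s,t};(\oc_s(\pi_0) \otimes \oc_t(\pi_1))$, and the heart of the argument is to expand the middle factor $\overline{\mathsf{c}}_{a,b};\mathsf{c}_{s,t}$ by the graded bialgebra law in (\ref{bialgeq}) — legitimate since $a+b = r = s+t$, and a finite sum since $R$ is f.a.s. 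After pushing the injections through the comultiplications by naturality of $\mathsf{c}$, sliding the symmetry through by naturality of $\sigma$, and pushing the projections through the cocontractions by naturality of $\overline{\mathsf{c}}$, each summand acquires factors of the form $\oc_k(\iota_i;\pi_j)$; the cross terms become $\oc_k(0)$, which by the second equation of (\ref{!feq}) vanish unless $k=0$. This kills every summand except the one with the diagonal index configuration, and the degree constraints of the bialgebra law then force that surviving summand to exist only when $(a,b)=(s,t)$, yielding $0$ off the diagonal. On the diagonal, the single surviving term simplifies — using the $\overline{\mathsf{w}}$-unit law of (\ref{monoideq}) together with the $\mathsf{w}$-counit law of (\ref{comonoideq}) — to $1_{\oc_s A} \otimes 1_{\oc_t B}$. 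Keeping the index bookkeeping straight through the bialgebra expansion, and confirming that the degree equations genuinely isolate the unique diagonal summand, is the delicate part; everything else reduces to naturality and the (co)unit laws.
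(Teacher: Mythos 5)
Your proposal is correct and follows essentially the same route as the paper's proof: the two $\chi^I$ computations, the collapse of $\chi^\otimes_r;{\chi^\otimes_r}^{-1}$ via biproduct orthogonality, functoriality, and (\ref{!feq}) applied to $\pi_0;\iota_0 + \pi_1;\iota_1 = 1$, and the expansion of $\overline{\mathsf{c}};\mathsf{c}$ by the graded bialgebra law (\ref{bialgeq}) with the cross terms $\oc_k(0)$ killed by (\ref{!feq}) unless $k=0$, are exactly the paper's steps. The only difference is presentational: you verify ${\chi^\otimes_r}^{-1};\chi^\otimes_r = 1$ componentwise via $\iota_{a,b};(-);\pi_{s,t} = \delta_{(a,b),(s,t)} \cdot 1$, whereas the paper keeps the global sums and finishes with the identity $\sum_{s+t=r} \pi_{s,t};\iota_{s,t} = 1$, which is the same biproduct bookkeeping packaged differently.
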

\begin{proof} For starters, by one of the axioms in  (\ref{bialgeq}), we already have that: 
\[{\chi^I}^{-1};\chi^I = \overline{\mathsf{w}} ; \mathsf{w} \underset{(\ref{bialgeq})}{=} 1 \]
For the zero object, $1 =0$, therefore by (\ref{!feq}), we also have that: 
\[\chi^I; {\chi^I}^{-1}= \mathsf{w};\overline{\mathsf{w}} \underset{(\ref{!feq})}{=} \oc_0(0) = \oc_0(1) = 1\]
To show that $\chi^\otimes_r$ and ${\chi^\otimes_{r}}^{-1}$ inverses, we need to use the biproduct identities that $\iota_{j};\pi_{i}=\delta_{j,i}$ and $\sum_j \pi_j ; \iota_j = 1$. Now for $\chi^\otimes_r ; {\chi^\otimes_{r}}^{-1}=1$, we use the first biproduct identity to first greatly reduce the sum, and then use (\ref{!feq}) to afterwards use the second biproduct identity: 
\begin{gather*}
  \chi^\otimes_r ; {\chi^\otimes_{r}}^{-1} = \left( \sum\limits_{s+t =r} \mathsf{c}_{s,t}; \left( \oc_s(\pi_0) \otimes \oc_t(\pi_1) \right); \iota_{s,t}  \right); \left(\sum\limits_{s^\prime+t^\prime =r} \pi_{s^\prime,t^\prime}; \left( \oc_{s^\prime}(\iota_0) \otimes \oc_{t^\prime}(\iota_1) \right);  \overline{\mathsf{c}}_{s^\prime,t^\prime} \right) \\
  = \sum\limits_{s+t =r} \mathsf{c}_{s,t}; \left( \oc_s(\pi_0) \otimes \oc_t(\pi_1) \right); \left( \oc_{s^\prime}(\iota_0) \otimes \oc_{t^\prime}(\iota_1) \right); \overline{\mathsf{c}}_{s,t} \underset{(\ref{!feq})}{=} \oc_r\left( \pi_0;\iota_0 + \pi_1;\iota_1 \right) = \oc_r(1) = 1
\end{gather*}
For ${\chi^\otimes_{r+s}}^{-1};\chi^\otimes_r =1$, we first use the graded bialgebra axiom (\ref{bialgeq}), then using naturality and the first biproduct identity, we obtain $\oc_a(0)$, we can then greatly simplify the sum using (\ref{!feq}) in order to use the second biproduct identity: 
\begin{gather*}
      {\chi^\otimes_{r}}^{-1}; \chi^\otimes_r = \left(\sum\limits_{s^\prime+t^\prime =r} \pi_{s^\prime,t^\prime}; \left( \oc_{s^\prime}(\iota_0) \otimes \oc_{t^\prime}(\iota_1) \right);  \overline{\mathsf{c}}_{s^\prime,t^\prime} \right); \left( \sum\limits_{s+t =r} \mathsf{c}_{s,t}; \left( \oc_s(\pi_0) \otimes \oc_t(\pi_1) \right); \iota_{s,t}  \right) \\
      = \sum\limits_{\substack{s+t =r \\ s^\prime + t^\prime = r}} \pi_{s^\prime,t^\prime}; \left( \oc_{s^\prime}(\iota_0) \otimes \oc_{t^\prime}(\iota_1) \right);  \overline{\mathsf{c}}_{s^\prime,t^\prime}; \mathsf{c}_{s,t}; \left( \oc_s(\pi_0) \otimes \oc_t(\pi_1) \right); \iota_{s,t} \\
\underset{(\ref{bialgeq})}{=} \sum\limits_{\substack{s+t =r \\ s^\prime + t^\prime = r}}  \sum\limits_{\substack{a+b = s \quad c+d = t \\ a+c = s^\prime \quad b+d = t^\prime}} \pi_{s^\prime,t^\prime}; \left( \oc_{s^\prime}(\iota_0) \otimes \oc_{t^\prime}(\iota_1) \right);(\mathsf{c}_{a,b} \otimes \mathsf{c}_{c,d}); (1 \otimes \sigma \otimes 1); \overline{\mathsf{c}}_{a,c} \otimes \overline{\mathsf{c}}_{b,d}; \left( \oc_s(\pi_0) \otimes \oc_t(\pi_1) \right); \iota_{s,t} \\
= \sum\limits_{\substack{s+t =r \\ s^\prime + t^\prime = r}}  \sum\limits_{\substack{a+b = s \quad c+d = t \\ a+c = s^\prime \quad b+d = t^\prime}} \pi_{s^\prime,t^\prime};(\mathsf{c}_{a,b} \otimes \mathsf{c}_{c,d}); (\oc_a(1) \otimes \oc_b(0) \otimes \oc_c(0) \otimes \oc_d(1)); (1 \otimes \sigma \otimes 1); \overline{\mathsf{c}}_{a,c} \otimes \overline{\mathsf{c}}_{b,d}; \iota_{s,t} \\
\underset{(\ref{!feq})}{=} \sum\limits_{s+t =r}  \pi_{s, t};(\mathsf{c}_{a,b} \otimes \mathsf{c}_{c,d}); (\oc_s(1) \otimes \oc_0(0) \otimes \oc_0(0) \otimes \oc_t(1)); (1 \otimes \sigma \otimes 1); \overline{\mathsf{c}}_{a,c} \otimes \overline{\mathsf{c}}_{b,d}; \iota_{s,t}  \underset{(\ref{!feq})}{=}  \sum\limits_{s+t =r}  \pi_{s, t}; \iota_{s,t} = 1
\end{gather*}    
Thus the graded Seely isomorphisms hold as desired. 
\end{proof}

The converse of Thm \ref{thm:Seely} is also true. Indeed, note that $\chi^\otimes_r$ and $\chi^I$ can be defined for any graded coalgebra modality, and we could ask that they be natural isomorphisms, giving us a notion of graded storage coalgebra modality -- the graded version of \cite[Def 10]{Blute2019}. Then it follows that we can construct both $\overline{\mathsf{c}}$ and $\overline{\mathsf{w}}$, and obtain a graded additive bialgebra modality. 

\begin{proposition}\label{prop:Seely2} Let $R$ be a f.a.s.\! semiring and let $(\oc,\mathsf{p},\mathsf{d},\mathsf{c}, \mathsf{w})$ be an $R$-graded coalgebra modality on an additive symmetric monoidal category $\mathcal{L}$ with finite biproducts. Suppose that the natural transformations $\chi^\otimes_r$ and $\chi^I$, as defined in Thm \ref{thm:Seely}, are natural isomorphisms. Define: 
    \begin{gather*}
\overline{\mathsf{c}}_{r,s} :=  \xymatrixcolsep{4pc}\xymatrix{ \oc_r A \otimes \oc_s A \ar[r]^-{\iota_{r,s}} & \bigoplus\limits_{\substack{a,b \in R \\ a+b = r+s}}  \oc_a A \otimes \oc_b B \ar[r]^-{{\chi^\otimes_{r+s}}^{-1}} & \oc_{r+s}(A \oplus A) \ar[r]^-{\oc_{r+s}(\nabla)} & \oc_{r+s} A
  } \\
  \overline{\mathsf{w}} := \xymatrixcolsep{4pc}\xymatrix{I \ar[r]^-{{\chi^I}^{-1}} & \oc_0 \mathsf{0} \ar[r]^-{\oc_0(0)} & \oc_0 A 
  }
\end{gather*}
where $\nabla: A \oplus A \to A$ is the codiagonal map of the biproduct. Then $(\oc,\mathsf{p},\mathsf{d},\mathsf{c}, \mathsf{w}, \overline{\mathsf{c}}, \overline{\mathsf{w}})$ is an $R$-graded additive bialgebra modality. 
\end{proposition}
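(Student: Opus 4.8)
The plan is to verify, in turn, the four groups of axioms (\ref{monoideq}), (\ref{bialgeq}), (\ref{dceq}), and (\ref{!feq}) of Definition \ref{def:addbialg} for the maps $\overline{\mathsf{c}}$ and $\overline{\mathsf{w}}$ produced by the construction. The organising idea is that, in any additive category with finite biproducts, every object $A$ is canonically a commutative $\oplus$-bimonoid, with multiplication the codiagonal $\nabla\colon A\oplus A\to A$, comultiplication the diagonal $\Delta\colon A\to A\oplus A$, and units and counits the zero maps into and out of $\mathsf{0}$. Since $\overline{\mathsf{c}}$, $\overline{\mathsf{w}}$, and (as I will show) $\mathsf{c}$, $\mathsf{w}$ are all obtained by applying $\oc$ to this canonical structure and transporting along the isomorphisms $\chi^\otimes$ and $\chi^I$, each bimonoid axiom for $\oc_\bullet A$ should reduce to the corresponding identity for $(\Delta,\nabla,0)$, with the isomorphisms cancelled at the end.

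First I would record the bookkeeping identities that make this precise. Unfolding the definition of $\chi^\otimes_{r+s}$ from Thm \ref{thm:Seely}, postcomposing with $\pi_{r,s}$ so that $\iota;\pi=\delta$ kills all but one summand, and then using naturality of $\mathsf{c}$ together with $\Delta;\pi_0=1=\Delta;\pi_1$, one obtains the dual presentation $\mathsf{c}_{r,s}=\oc_{r+s}(\Delta);\chi^\otimes_{r+s};\pi_{r,s}$, which pairs with the given $\overline{\mathsf{c}}_{r,s}=\iota_{r,s};{\chi^\otimes_{r+s}}^{-1};\oc_{r+s}(\nabla)$. Similarly, naturality of $\mathsf{w}$ along the unique maps to and from $\mathsf{0}$ gives $\oc_0(0_{\mathsf{0}\to A});\mathsf{w}=\chi^I$ and $\oc_0(0_{A\to\mathsf{0}});\chi^I=\mathsf{w}$, which is all that is needed for the unit computations.

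With these in hand the easy axioms fall out quickly. The unit law $\overline{\mathsf{w}};\mathsf{w}=1$ is just ${\chi^I}^{-1};\oc_0(0_{\mathsf{0}\to A});\mathsf{w}={\chi^I}^{-1};\chi^I=1$, and the degree-$0$ case $\oc_0(0)=\mathsf{w};\overline{\mathsf{w}}$ of the second equation of (\ref{!feq}) follows by factoring $0_{A\to A}$ through $\mathsf{0}$. The commutativity $\sigma;\overline{\mathsf{c}}_{s,r}=\overline{\mathsf{c}}_{r,s}$ descends from commutativity of $\nabla$ and the compatibility of $\chi^\otimes$ with the symmetries; the unit laws for $\overline{\mathsf{c}}$ from unitality of $\nabla$; and associativity from associativity of $\nabla$, after setting up the iterated isomorphism $\oc_n(A\oplus A\oplus A)\cong\bigoplus_{a+b+c=n}\oc_a A\otimes\oc_b A\otimes\oc_c A$ by composing two copies of $\chi^\otimes$ and checking independence of the bracketing. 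The first equation of (\ref{!feq}) has a clean conceptual proof: writing $f+g=\Delta;(f\oplus g);\nabla$, applying $\oc_r$, and inserting $\chi^\otimes_r;{\chi^\otimes_r}^{-1}=1$, naturality of $\chi^\otimes$ in $f\oplus g$ turns the middle into $\bigoplus_{s+t=r}\oc_s(f)\otimes\oc_t(g)$ and collapses to the stated sum. For (\ref{dceq}), the hypothesis that $R$ is f.a.s.\! forces $+^{-1}(1)=\{(1,0),(0,1)\}$, so only two summands survive, and the dereliction counit law disposes of them.

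The hard part will be the bimonoid ``middle-four'' law, the first equation of (\ref{bialgeq}). Substituting the two presentations above gives $\overline{\mathsf{c}}_{r,s};\mathsf{c}_{t,u}=\iota_{r,s};{\chi^\otimes_n}^{-1};\oc_n(\nabla;\Delta);\chi^\otimes_n;\pi_{t,u}$ with $n=r+s=t+u$, so everything hinges on $\oc_n(\nabla;\Delta)$, where $\nabla;\Delta\colon A\oplus A\to A\oplus A$ is the canonical bimonoid interaction (the ``all-ones matrix''). The bialgebra law for the biproduct bimonoid rewrites $\nabla;\Delta$ as $(\Delta\oplus\Delta);(1\oplus\sigma\oplus1);(\nabla\oplus\nabla)$, and the real work is to push $\chi^\otimes$ through the fourfold biproduct $A\oplus A\oplus A\oplus A$ appearing here and match the resulting index bookkeeping with the sum over $a+b=r$, $c+d=s$, $a+c=t$, $b+d=u$ on the right-hand side of (\ref{bialgeq}). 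I expect this matching --- tracking which $\oc_\bullet A\otimes\oc_\bullet A$ summand each injection and projection lands in, and using coassociativity and cocommutativity of $\mathsf{c}$ to reorder the four tensor factors --- to be the principal obstacle, with the f.a.s.\! hypothesis guaranteeing that all intervening sums are finite and hence well-defined. The remaining two equations of (\ref{bialgeq}), namely $\overline{\mathsf{w}};\mathsf{c}_{0,0}=\overline{\mathsf{w}}\otimes\overline{\mathsf{w}}$ and $\overline{\mathsf{c}}_{0,0};\mathsf{w}=\mathsf{w}\otimes\mathsf{w}$, are the degenerate $r=s=0$ instances of the same computation and follow by the same method.
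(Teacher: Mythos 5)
Your strategy is the same as the paper's own: the paper's proof is only a two-sentence sketch deferring to the non-graded case \cite[Sec 7]{Blute2019}, whose stated main idea is exactly your organising principle, namely that $\oc$ transports the canonical $\oplus$-bimonoid $(\nabla,\Delta,0)$ on $A$ to the graded $\otimes$-bimonoid structure on the $\oc_\bullet A$ via $\chi^\otimes$ and $\chi^I$. Your bookkeeping identity $\mathsf{c}_{r,s}=\oc_{r+s}(\Delta);\chi^\otimes_{r+s};\pi_{r,s}$ is correct, your derivations of the first equation of (\ref{!feq}) and of (\ref{dceq}) are sound, and your reduction of the middle-four law of (\ref{bialgeq}) to conjugating $\oc_n(\nabla;\Delta)=\oc_n\big((\Delta\oplus\Delta);(1\oplus\sigma\oplus 1);(\nabla\oplus\nabla)\big)$ by $\chi^\otimes_n$ is the right way to organise the index-matching that the paper dismisses as ``brute force calculations,'' so leaving that computation unexecuted puts you at the paper's own level of detail rather than below it.

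There is, however, one genuine gap, and it sits precisely where the graded proof cannot mimic the non-graded blueprint. The second equation of (\ref{!feq}) also demands $\oc_r(0)=0$ for every $r\neq 0$, which your plan omits: factoring $0_{A\to B}=0_{A\to\mathsf{0}};0_{\mathsf{0}\to B}$ only exhibits $\oc_r(0)$ as a map through $\oc_r\mathsf{0}$, and for $r\neq 0$ there is no weakening and no unit Seely isomorphism at grade $r$ to force this composite to vanish ($\oc_r$ preserves identities, not zero maps, and $1_{\mathsf{0}}=0_{\mathsf{0}\to\mathsf{0}}$, so nothing is automatic). What is needed is the lemma that $\oc_r\mathsf{0}$ is a zero object for $r\neq 0$, which does follow from the hypotheses but by a separate argument: at the pair $(A,\mathsf{0})$, naturality of $\mathsf{w}$ gives $\oc_0(0_{A\to\mathsf{0}})=\mathsf{w};{\chi^I}^{-1}$, so by the counit law in (\ref{comonoideq}) the $(r,0)$-component of $\chi^\otimes_r$ is $\mathsf{c}_{r,0};(1\otimes\oc_0(0_{A\to\mathsf{0}}))=1\otimes{\chi^I}^{-1}$, an isomorphism; since $\chi^\otimes_r$ is an isomorphism, the projection $\pi_{r,0}$ is one as well, and an invertible biproduct projection forces all other summands to be zero objects (from $\iota_{r,0};\pi_{r,0}=1$ one gets $\pi_{r,0};\iota_{r,0}=1$, hence $\iota_{s,t}=\iota_{s,t};\pi_{r,0};\iota_{r,0}=0$ and so $1_{\oc_s A\otimes\oc_t\mathsf{0}}=0$ for $(s,t)\neq(r,0)$); taking $A=\mathsf{0}$ and $(s,t)=(0,r)$ yields $\oc_r\mathsf{0}\cong I\otimes\oc_r\mathsf{0}\cong\mathsf{0}$. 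Moreover, this same vanishing of the cross terms $\oc_s A\otimes\oc_t\mathsf{0}$, $t\neq 0$, is secretly what your ``easy'' unit laws in (\ref{monoideq}) rest on: transporting $(1\otimes\overline{\mathsf{w}});\overline{\mathsf{c}}_{r,0}$ along $\chi^\otimes$, unitality of $\nabla$ accounts only for the $(r,0)$-summand of the sum defining $\chi^\otimes_r$, and the remaining summands must be killed by exactly this lemma. So the plan is the correct one, but without this additional idea both (\ref{!feq}) and the unit laws of (\ref{monoideq}) remain unproven.
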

\begin{proof} The proof is essentially the same as the non-graded version \cite[Sec 7]{Blute2019}, which can be done mostly by brute force calculations. The main idea for this proof is that $\oc$ transfers the canonical bimonoid structure on $\oplus$ to a graded bimonoid structure on $\otimes$. 
\end{proof}

It is worth mentioning that the constructions of Thm \ref{thm:Seely} and Prop \ref{prop:Seely2} are inverses of each other. In particular, if we start with a $R$-graded additive bialgebra modality $(\oc,\mathsf{p},\mathsf{d},\mathsf{c}, \mathsf{w}, \overline{\mathsf{c}}, \overline{\mathsf{w}})$, and apply the constructions of Prop \ref{prop:Seely2} we get back precisely $\overline{\mathsf{c}}$ and $\overline{\mathsf{u}}$. The proof is again similar to the non-graded version  \cite[Sec 7]{Blute2019}, where in this case one uses the graded bialgebra axioms (\ref{bialgeq}) and (\ref{!feq}), as well as the biproduct identities. 

As mentioned, the definition of a graded additive bialgebra modality does not require the underlying graded coalgebra modality to be monoidal. The same is true in the non-graded setting. In fact, in the non-graded setting it can be shown that every additive bialgebra modality is also a monoidal coalgebra modality, and vice-versa \cite[Thm 1]{Blute2019}. However the non-graded constructions of going from an additive bialgebra modality to a monoidal coalgebra modality \cite[Prop 2]{Blute2019}, or vice-versa \cite[Prop 1]{Blute2019}, use $\mathsf{d}$ and $\mathsf{w}$. As such, these constructions are no longer properly typed in the graded setting since $\mathsf{d}$ and $\mathsf{w}$ are now only well-typed for $\oc_1$ and $\oc_0$ respectively. Here are now counter-examples to further justify why graded monoidal coalgebra modalities are not the same as graded additive bialgebra modalities: 

\begin{example} Observe that for any semiring $R$, any non-graded monoidal coalgebra modality is an $R$-graded monoidal coalgebra modality by setting $\oc_r = \oc$ for all $r \in R$. Now take $R=\mathbb{N}$ and suppose that we are in an additive symmetric monoidal category with finite biproducts. If $\oc$ was also an $\mathbb{N}$-graded additive bialgebra modality, then by Thm \ref{thm:Seely},  we would have that $\oc(A \oplus B) = \oc_1(A \oplus B) \cong (\oc_1 A \otimes \oc_0 B) \oplus (\oc_1 A \otimes \oc_0 B) = (\oc A \otimes \oc B) \oplus (\oc A \otimes \oc B)$. But this is false since the non-graded version of the Seely isomorphism gives us $\oc(A \oplus B) \cong \oc A \otimes \oc B$ only. So $\oc$ while is a $\mathbb{N}$-graded monoidal coalgebra modality, $\oc$ cannot be an $\mathbb{N}$-graded additive bialgebra modality. 
\end{example}

\begin{example} For any additive symmetric monoidal category with biproducts, we have an $\mathbb{N}$-graded additive bialgebra modality by setting $\oc_0 A = I$ and $\oc_{n+1} A = \mathsf{0}$ for all $n \in \mathbb{N}$, and the rest of the structure is quite trivial. The promotion is $\mathsf{p}_{0,n}= 1_I$ and $\mathsf{p}_{n+1,0}= 0$, while the dereliction is simply $\mathsf{d}=0$. The counit and unit are both $\mathsf{w}=\overline{\mathsf{w}}=1$. The comultiplication and multiplication are $\mathsf{c}_{0,0}=1=\overline{\mathsf{c}}_{0,0}$ and zero otherwise. However, $\oc$ is not monoidal since this would in particular require a map $\mu^I_1: I \to \oc_1 I$ such that $\mu^I_1;\mathsf{d}=1$. But since $\mathsf{d}=0$, this would mean that $0=1$, which is not the case since $I$ is not a zero object (if it was, then the additive symmetric monoidal category is trivial). Thus $\oc$ is an $\mathbb{N}$-graded additive bialgebra modality which is not a $\mathbb{N}$-graded monoidal coalgebra modality. 
\end{example}

We now introduce the monoidal version of graded additive bialgebra modalities, by requiring extra compatibility relations between the graded monoid structure and the graded monoidal structure. These extra axioms are required if we wish to still have a bijective correspondence between deriving transformations and coderelictions in the graded setting (Thm \ref{thm:der-coder}). These axioms are graded versions of \cite[Prop 2]{Blute2019} (which were first observed by Fiore in \cite{fiore2007differential}): 

\begin{definition}\label{def:monaddbialg} Let $R$ be a f.a.s.\! semiring. An \textbf{$R$-graded monoidal additive bialgebra modality} on an additive symmetric monoidal category $\mathcal{L}$ is a tuple $(\oc,\mathsf{p},\mathsf{d},\mathsf{c}, \mathsf{w},\mu^\otimes, \mu^I, \overline{\mathsf{c}}, \overline{\mathsf{w}})$ consisting of an $R$-graded monoidal coalgebra modality $(\oc,\mathsf{p},\mathsf{d},\mathsf{c}, \mathsf{w}, \mu^\otimes, \mu^I)$ and an $R$-graded additive bialgebra modality $(\oc,\mathsf{p},\mathsf{d},\mathsf{c}, \mathsf{w}, \overline{\mathsf{c}}, \overline{\mathsf{w}})$, such that the following equalities hold:  
\begin{enumerate}[{\em (i)}]
\item $\overline{\mathsf{c}}$ and $\overline{\mathsf{w}}$ are $R$-graded $\oc$-coalgebra morphisms: 
\begin{align} \label{cwcoalgeq} 
    \overline{\mathsf{c}}_{rs,rs}; \mathsf{p}_{r,s+t} = (\mathsf{p}_{r,s} \otimes \mathsf{p}_{r,t});\mu^\otimes_r;\oc_r(\overline{\mathsf{c}}_{s,t}) && \overline{\mathsf{w}};\mathsf{p}_{r,0} = \mu^I_r; \oc(\overline{\mathsf{w}})
\end{align}
\item $\mu^\otimes$ is also compatible with $\overline{\mathsf{c}}$ and $\overline{\mathsf{w}}$ in the following sense: 
\begin{align} \label{cwmuexeq}
    (1 \otimes \overline{\mathsf{c}}_{r,s}); \mu^\otimes_{r+s}= (\mathsf{c}_{r,s} \otimes 1 \otimes 1);(1 \otimes \sigma \otimes 1);(\mu^\otimes_r \otimes\mu^\otimes_s);\overline{\mathsf{c}}_{r,s} && (1 \otimes \overline{\mathsf{w}}); \mu^\otimes_0= \mathsf{w};  \overline{\mathsf{w}}
\end{align}
\end{enumerate}
\end{definition}

\begin{example} Every $\mathsf{0}$-graded (i.e.\! non-graded) additive bialgebra modality is always a $\mathsf{0}$-graded monoidal additive bialgebra modality. Since for an additive symmetric monoidal category, non-graded monoidal coalgebra modalities are always non-graded additive bialgebra modalities \cite[Thm 1]{Blute2019}, we may also alternatively state that every $\mathsf{0}$-graded monoidal coalgebra modality is always a $\mathsf{0}$-graded monoidal additive bialgebra modality. 
\end{example}

\begin{example}\label{ex:RELaddbialg} $(\oc,\mathsf{p},\mathsf{d},\mathsf{c}, \mathsf{w},\mu^\otimes, \mu^I, \overline{\mathsf{c}}, \overline{\mathsf{w}})$ is an $\mathbb{N}$-graded monoidal additive bialgebra modality on $\mathsf{REL}$. 
\end{example}

\section{Graded Differential Linear Logic and Coderelictions}\label{sec:gcoder}

In this section, we are finally in a position to properly introduce the graded version of the codereliction. The final rule for $\mathsf{GDiLL}$ is the \textbf{codereliction rule} ($\overline{\mathsf{d}}$): 
\begin{align}
    \begin{prooftree}
\hypo{\Gamma \vdash A}
\infer1[$\overline{\mathsf{d}}$]{\Gamma \vdash \oc_1 A}
\end{prooftree}
\end{align}
From the codereliction rule, one can take a non-linear proof of degree 1 and obtain a linear proof. In fact, the cut-elimination rules tell us that in $\mathsf{GDiLL}$, the linear proofs are precisely the non-linear proofs of degree 1. In categorical terms, this is given by a natural transformation $\overline{\mathsf{d}}: A \to \oc_1 A$, which note is of the opposite type of $\mathsf{d}$. The following is mostly a graded version of \cite[Def 9]{Blute2019}: 

\begin{definition}\label{def:coder} Let $R$ be a f.a.s.\! semiring. An \textbf{$R$-graded monoidal additive bialgebra differential modality} on an additive symmetric monoidal category $\mathcal{L}$ is a tuple $(\oc,\mathsf{p},\mathsf{d},\mathsf{c}, \mathsf{w}, \mu^\otimes, \mu^I, \overline{\mathsf{c}}, \overline{\mathsf{w}}, \overline{\mathsf{d}})$ consisting of an $R$-graded monoidal additive bialgebra modality $(\oc,\mathsf{p},\mathsf{d},\mathsf{c}, \mathsf{w}, \mu^\otimes, \mu^I, \overline{\mathsf{c}}, \overline{\mathsf{w}})$ and a \textbf{codereliction} $\overline{\mathsf{d}}$, which is a natural transformation $\overline{\mathsf{d}}: A \to \oc_1 A$ such that the following equations hold: 
\begin{enumerate}[{\em (i)}]
\item \textbf{Linear Rule:} $\overline{\mathsf{d}};\mathsf{d} = 1$
\item \textbf{Product Rule:} $\overline{\mathsf{d}};\mathsf{c}_{r,s} = \delta_{r,0} \cdot (\overline{\mathsf{w}} \otimes \overline{\mathsf{d}}) + \delta_{0,s} \cdot (\overline{\mathsf{d}} \otimes \overline{\mathsf{w}})$ where $r+s = 1$
\item \textbf{Chain Rule:} $\overline{\mathsf{d}};\mathsf{p}_{1,1} = (\overline{\mathsf{w}} \otimes \overline{\mathsf{d}});(\mathsf{p}_{0,1} \otimes \overline{\mathsf{d}}); \overline{\mathsf{c}}_{0,1}$
\item \textbf{Monoidal Rule:} $(1 \otimes \overline{\mathsf{d}});\mu^\otimes_1 = (\mathsf{d} \otimes 1);\overline{\mathsf{d}}$
  \end{enumerate}
An \textbf{$R$-graded differential linear category} is an additive symmetric monoidal category equipped with a chosen  $R$-graded monoidal additive bialgebra differential modality.  
\end{definition}

Let us quickly compare this with the non-graded version. For starters, there is no longer a constant rule (which was shown to be provable anyways in \cite[Lem 6]{Blute2019}) between $\overline{\mathsf{d}}$ and $\mathsf{w}$, since in general they cannot compose. The explanation for the product rule is the same as for (\ref{dceq}), and recaptures the non-graded version when $R = \mathsf{0}$. While the monoidal rule is provable from the rest in the non-graded case, this is no longer necessarily true in the graded setting. The intuition for the codereliction is that it provides a way to linearize differentiable maps of degree $1$, specifically by taking the derivative and then evaluating at zero. The linear rule says that linearizing a linear map does nothing. The product rule says that linearization of the product of a linear map and a constant map results in the product of the linear map and the constant map. The chain rule tells us how to linearize the composition of differentiable maps that result in one of degree $1$. Lastly, the monoidal rule tells us how the codereliction interacts with the monoidal product. 

We now explain how even in the graded setting, deriving transformations and coderelictions are equivalent. Thus we may claim that there is only one notion of differentiation in categorical models of $\mathsf{GDiLL}$. Syntactically, the construction from one to the other is given as follows: 
\begin{align*}
\begin{prooftree}
\hypo{\Gamma \vdash !_{r}A}
\hypo{\Delta \vdash A}
\infer2[$\partial_{r}$]{\Gamma, \Delta \vdash !_{r+1}A}
\end{prooftree}
:= 
\begin{prooftree}
\hypo{\Gamma \vdash !_{r}A}
\hypo{\Delta \vdash A}
\infer1[$\overline{\mathsf{d}}$]{\Delta \vdash !_{1}A}
\infer2[$\overline{\mathsf{c}}_{r,1}$]{\Gamma, \Delta \vdash !_{r+1}A}
\end{prooftree}
&& 
\begin{prooftree}
\hypo{\Gamma \vdash A}
\infer1[$\overline{\mathsf{d}}$]{\Gamma \vdash !_{1}A}
\end{prooftree}
:= 
\begin{prooftree}
\infer0[$\overline{\mathsf{w}}$]{\vdash !_{0}A}
\hypo{\Gamma \vdash A}
\infer2[$\partial_0$]{\Gamma \vdash !_{1}A}
\end{prooftree}
\end{align*}
We will require an extra axiom for the deriving transformation, which is taking the product rule at zero: 
\begin{enumerate}[{\em (i)}]
\setcounter{enumi}{5}
\item \textbf{Product Rule at $0$:} $\partial_0; \mathsf{c}_{r,s} = \delta_{r,0} \cdot (\mathsf{c}_{0,0} \otimes 1);(1 \otimes \partial_0) +  \delta_{0,s} \cdot (\mathsf{c}_{0,0} \otimes 1); (1 \otimes \sigma); (\partial_0 \otimes 1)$ where $r+s = 1$
  \end{enumerate}
  This is not necessarily captured by the product rule in Def \ref{def:diffcat}, but is still satisfied in the non-graded setting. We may now properly state the graded version of \cite[Thm 4]{Blute2019}. 

\begin{theorem}\label{thm:der-coder} Let $R$ be f.a.s.\! and let $(\oc,\mathsf{p},\mathsf{d},\mathsf{c}, \mathsf{w}, \mu^\otimes, \mu^I, \overline{\mathsf{c}}, \overline{\mathsf{w}})$ be an $R$-graded monoidal additive bialgebra modality on an additive symmetric monoidal category $\mathcal{L}$. Then there is a bijective correspondence between coderelictions and monoidal deriving transformations that also satisfy the product rule at $0$. Explicitly: 
\begin{enumerate}[{\em (i)}]
\item If $\overline{\mathsf{d}}$ is a codereliction, then for every $r \in R$ define: 
\begin{align}
    \partial_r  := \xymatrixcolsep{5pc}\xymatrix{ \oc_r A \otimes A \ar[r]^-{1 \otimes \overline{\mathsf{d}}} & \oc_r A \otimes \oc_1 A \ar[r]^-{\overline{\mathsf{c}}_{r,1}} & \oc_{r+1} A 
  } 
\end{align}
Then $\partial$ is a monoidal deriving transformation which also satisfies the product Rule at $0$.
\item If $\partial$ is a monoidal deriving transformation which also satisfies the product Rule at $0$, then define: 
\begin{align}
    \overline{\mathsf{d}} := \xymatrixcolsep{5pc}\xymatrix{ A \ar[r]^-{\overline{\mathsf{w}} \otimes 1} & \oc_0 A \otimes A \ar[r]^-{\partial_0} & \oc_1 A }
\end{align}   
Then $\overline{\mathsf{d}}$ is a codereliction. 
\end{enumerate}
Furthermore, these constructions are inverses of each other. 
\end{theorem}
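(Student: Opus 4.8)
The plan is to treat the two constructions separately, first checking that each one lands in the correct class of structures, and then verifying that they are mutually inverse. Throughout I would argue equationally, freely using strictness so that a point tensored with a map may be reassociated at will (e.g. $\overline{\mathsf{w}} \otimes \overline{\mathsf{d}} = \overline{\mathsf{d}};(\overline{\mathsf{w}} \otimes 1)$ as maps $A \to \oc_0 A \otimes \oc_1 A$). The bulk of the work is a sequence of string-diagram verifications, each pairing one axiom of the source structure with one or two of the graded (monoidal additive) bialgebra axioms.

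For part (i), assume $\overline{\mathsf{d}}$ is a codereliction and set $\partial_r = (1 \otimes \overline{\mathsf{d}});\overline{\mathsf{c}}_{r,1}$. Naturality is immediate from naturality of $\overline{\mathsf{d}}$ and $\overline{\mathsf{c}}$. For the linear rule I would compute $\partial_0;\mathsf{d} = (1\otimes\overline{\mathsf{d}});\overline{\mathsf{c}}_{0,1};\mathsf{d}$, rewrite $\overline{\mathsf{c}}_{0,1};\mathsf{d} = \mathsf{w}\otimes\mathsf{d}$ via (\ref{dceq}) (the Kronecker deltas collapse to a single term since $r=0$, $s=1$), and then apply $\overline{\mathsf{d}};\mathsf{d}=1$. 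The product rule and the product rule at $0$ follow by feeding the codereliction product rule through the graded bimonoid interchange law (\ref{bialgeq}) relating $\overline{\mathsf{c}}$ and $\mathsf{c}$; the chain rule combines the codereliction chain rule with the coalgebra-morphism compatibility (\ref{cwcoalgeq}); the symmetry rule follows from cocommutativity and associativity of $\overline{\mathsf{c}}$ in (\ref{monoideq}); and the monoidal rule follows from the codereliction monoidal rule together with the $\mu^\otimes$-compatibility (\ref{cwmuexeq}).

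For part (ii), assume $\partial$ is a monoidal deriving transformation satisfying the product rule at $0$ and set $\overline{\mathsf{d}} = (\overline{\mathsf{w}}\otimes 1);\partial_0$. Naturality is again immediate. The codereliction linear rule drops out of $\overline{\mathsf{d}};\mathsf{d} = (\overline{\mathsf{w}}\otimes 1);(\mathsf{w}\otimes 1)$ (deriving linear rule) together with $\overline{\mathsf{w}};\mathsf{w}=1$ from (\ref{bialgeq}). The codereliction product, chain, and monoidal rules are obtained by precomposing the corresponding deriving-transformation axioms (respectively the product rule at $0$, the chain rule, and the monoidal rule) with $\overline{\mathsf{w}}\otimes 1$ and simplifying using the unit relations $\overline{\mathsf{w}};\mathsf{w}=1$ and $\overline{\mathsf{w}};\mathsf{p}_{r,0} = \mu^I_r;\oc(\overline{\mathsf{w}})$ from (\ref{cwcoalgeq}).

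It then remains to show the constructions are inverse. One direction is short: starting from a codereliction, the induced $\partial_0 = (1\otimes\overline{\mathsf{d}});\overline{\mathsf{c}}_{0,1}$ yields back $(\overline{\mathsf{w}}\otimes 1);\partial_0 = (\overline{\mathsf{w}}\otimes\overline{\mathsf{d}});\overline{\mathsf{c}}_{0,1} = \overline{\mathsf{d}};(\overline{\mathsf{w}}\otimes 1);\overline{\mathsf{c}}_{0,1} = \overline{\mathsf{d}}$, using only the unit law $(\overline{\mathsf{w}}\otimes 1);\overline{\mathsf{c}}_{0,1}=1$ of (\ref{monoideq}). The hard part, and the step I expect to be the main obstacle, is the reverse round trip: given $\partial$, forming $\overline{\mathsf{d}}=(\overline{\mathsf{w}}\otimes 1);\partial_0$ and then $\partial'_r=(1\otimes\overline{\mathsf{d}});\overline{\mathsf{c}}_{r,1}$, I must prove $\partial'_r = \partial_r$ for every $r$. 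Explicitly this is the identity $\partial_r = (1\otimes\overline{\mathsf{w}}\otimes 1);(1\otimes\partial_0);\overline{\mathsf{c}}_{r,1}$, which recovers the entire graded family from its degree-$0$ component via cocontraction. This is already nontrivial at $r=0$, since $\oc_0 A$ may contain ``constants'' not in the image of $\overline{\mathsf{w}}$, so no unit law alone suffices; it is precisely here that the product rule at $0$ is needed to pin down how $\partial_0$ interacts with $\overline{\mathsf{c}}$. Lifting the identity to general $r$ is then where the monoidal rule of Def \ref{def:diffmoncat} (an independent axiom in the graded setting, unlike the non-graded case) together with the compatibilities (\ref{cwcoalgeq}) and (\ref{cwmuexeq}) must be deployed. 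The argument parallels the non-graded \cite[Thm 4]{Blute2019}, but every intermediate map has to be re-typed to respect the grading, which is exactly why the separate monoidal and product-rule-at-$0$ axioms were introduced.
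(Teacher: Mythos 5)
Your parts (i) and (ii) are essentially the paper's proof. The paper's argument is exactly the transfer scheme you describe: each codereliction axiom is converted into the deriving-transformation axiom of the same name (and back) by pairing it with the appropriate graded bialgebra or monoidal compatibility, and your specific pairings --- linear rule with (\ref{dceq}), product rule with the bimonoid interchange law of (\ref{bialgeq}) together with the f.a.s.\! splitting of the sum into the two terms $(c,d)=(0,1),(1,0)$, chain rule with (\ref{cwcoalgeq}), monoidal rule with (\ref{cwmuexeq}) --- are the ones the paper's computations use. The only difference in emphasis is that the paper singles out the product rule as the delicate verification and writes it out in full in both directions, whereas you treat it as routine; your plan for it is nonetheless the same computation.

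Where you genuinely depart from the paper is on the inverse correspondence, and here your proposal has its one real gap. You correctly observe that the round trip $\partial \mapsto \overline{\mathsf{d}} \mapsto \partial'$ amounts to proving $\partial_r = (1 \otimes \overline{\mathsf{d}});\overline{\mathsf{c}}_{r,1}$ with $\overline{\mathsf{d}} = (\overline{\mathsf{w}} \otimes 1);\partial_0$, i.e.\! a graded cocontraction rule $(\overline{\mathsf{c}}_{r,0} \otimes 1);\partial_r = (1 \otimes \partial_0);\overline{\mathsf{c}}_{r,1}$, and that no unit law gives this: none of the stated axioms governs how $\partial_r$ interacts with $\overline{\mathsf{c}}$ (the product rule only controls $\partial;\mathsf{c}_{a,b}$ for decompositions with both indices of the form $(\cdot)+1$, and the product rule at $0$ only controls $\partial_0;\mathsf{c}_{r,s}$ --- note it constrains interaction with contraction, not with cocontraction as you assert). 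But having located the obstacle, you do not resolve it: naming the toolkit (product rule at $0$, monoidal rule, (\ref{cwcoalgeq}), (\ref{cwmuexeq})) is not an argument that it suffices, so as a proof your proposal is incomplete at exactly this step. You should be aware, however, that the paper itself supplies nothing here either --- it declares the inverse property ``straightforward'' --- while in the non-graded setting the corresponding identity is a genuine lemma about deriving transformations on additive bialgebra modalities, proved in \cite{Blute2019} en route to the bijection of \cite[Thm 4.12]{blute2006differential}. So your instinct about where the real work lies is arguably sounder than the paper's dismissal; adapting that non-graded cocontraction-rule argument grade by grade is precisely what is needed to close both your gap and the paper's.
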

\begin{proof} The proof is mostly the same as that of \cite[Thm 4.12]{blute2006differential}. Essentially, one uses the rule for a deriving transformation (resp. codereliction) to prove the rule of the same for a codereliction (resp. deriving transformation). The only tricky one might be the product rule, so let us prove it here. Starting with a deriving transformation $\partial$, we use the product rule at $0$ and (\ref{bialgeq}), to prove the product rule for the codereliction: 
\begin{gather*}
\overline{\mathsf{d}};\mathsf{c}_{r,s} = (\overline{\mathsf{w}} \otimes 1); \partial_0; \mathsf{c}_{r,s}= \delta_{r,0} \cdot  (\overline{\mathsf{w}} \otimes 1);(\mathsf{c}_{0,0} \otimes 1);(1 \otimes \partial_0) + \delta_{0,s} \cdot (\overline{\mathsf{w}} \otimes 1);(\mathsf{c}_{0,0} \otimes 1); (1 \otimes \sigma); (\partial_0 \otimes 1) \\
= \delta_{r,0} \cdot  (\overline{\mathsf{w}} \otimes \overline{\mathsf{w}} \otimes 1);(1 \otimes \partial_0) + \delta_{0,s} \cdot (\overline{\mathsf{w}} \otimes 1 \otimes \overline{\mathsf{w}}); (\partial_0 \otimes 1) = \delta_{r,0} \cdot (\overline{\mathsf{w}} \otimes \overline{\mathsf{d}}) + \delta_{0,s} \cdot (\overline{\mathsf{d}} \otimes \overline{\mathsf{w}})
\end{gather*}
 Conversely, starting from a codereliction, we use the axiom of a f.a.s.\! semiring which says that $c+d= 1$ implies that $c=1$ and $d=0$, $c=0$ and $d=1$. 
\begin{gather*}
\partial_{r+s+1}; \mathsf{c}_{r+1, s+1} = (1 \otimes \overline{\mathsf{d}}); \overline{\mathsf{c}}_{r+s+1,1}; \mathsf{c}_{r+1, s+1} = \sum\limits_{\substack{ a+b = r+s+1 \quad c+d = 1 \\ a+c = r+1 \quad b+d = s+1}} \!\!\!\!\!\! (1 \otimes \overline{\mathsf{d}});(\mathsf{c}_{a,b} \otimes \mathsf{c}_{c,d}); (1 \otimes \sigma \otimes 1); (\overline{\mathsf{c}}_{a,c} \otimes \overline{\mathsf{c}}_{b,d}) \\
=  (1 \otimes \overline{\mathsf{d}});(\mathsf{c}_{r+1,s} \otimes \mathsf{c}_{0,1}); (1 \otimes \sigma \otimes 1); (\overline{\mathsf{c}}_{r+1,0} \otimes \overline{\mathsf{c}}_{s,1}) + (1 \otimes \overline{\mathsf{d}});(\mathsf{c}_{r,s+1} \otimes \mathsf{c}_{1,0}); (1 \otimes \sigma \otimes 1); \overline{\mathsf{c}}_{r,1} \otimes \overline{\mathsf{c}}_{s+1,0} \\
= (\mathsf{c}_{r+1,s} \otimes \overline{\mathsf{w}} \otimes \overline{\mathsf{d}}); (1 \otimes \sigma \otimes 1); (\overline{\mathsf{c}}_{r+1,0} \otimes \overline{\mathsf{c}}_{s,1}) +  (\mathsf{c}_{r,s+1} \otimes \overline{\mathsf{d}} \otimes \overline{\mathsf{w}}); (1 \otimes \sigma \otimes 1); (\overline{\mathsf{c}}_{r,1} \otimes \overline{\mathsf{c}}_{s+1,0})  \\
=  (\mathsf{c}_{r+1,s} \otimes \overline{\mathsf{d}}); (1 \otimes \overline{\mathsf{c}}_{s,1}) +  (\mathsf{c}_{r,s+1} \otimes \overline{\mathsf{d}}); (1 \otimes \sigma); (\overline{\mathsf{c}}_{r,1} \otimes 1)\\= (\mathsf{c}_{r+1,s} \otimes 1);(1 \otimes \partial_s) +  (\mathsf{c}_{r,s+1} \otimes 1); (1 \otimes \sigma); (\partial_r \otimes 1)  
\end{gather*}
It is also straightforward to see that these constructions are indeed inverses of each other. 
\end{proof}

\begin{example} A $\mathsf{0}$-graded differential linear category is precisely a non-graded differential linear category. 
\end{example}

\begin{example}\label{ex:RELcoder} Define the following relation: 
\begin{align*}
    \overline{\mathsf{d}} := \lbrace (x,\overline{\mathsf{d}}_x) ~\vert~ \forall x \in X \rbrace \subset X \times \oc_1 X 
\end{align*}
Then $(\oc,\mathsf{p},\mathsf{d},\mathsf{c}, \mathsf{w},\mu^\otimes, \mu^I, \overline{\mathsf{c}}, \overline{\mathsf{w}}, \overline{\mathsf{d}})$ is an $\mathbb{N}$-graded monoidal additive bialgebra differential modality on $\mathsf{REL}$ (which again follows from the results in Section \ref{sec:sympowers}), and the induced deriving transformation is precisely that of Ex \ref{ex:RELd}. Furthermore, note that if $X$ is a finite set, then $\oc_n X$ is also a finite set for any $n \in \mathbb{N}$. Therefore, this $\mathbb{N}$-graded monoidal additive bialgebra differential modality also restricts to the subcategory $\mathsf{FREL}$ of finite sets and relations. Thus $\mathsf{REL}$ and $\mathsf{FREL}$ are both $\mathbb{N}$-graded differential linear categories. 
\end{example}

\section{Symmetric Powers}\label{sec:sympowers}

In this section, we explain how symmetric powers always provide an example of an $\mathbb{N}$-graded differential linear category. Symmetric powers are an important concept in both Linear Logic and classical algebra, though they refer to the dual notion of each other. In Linear Logic, the term symmetric power refers to the equalizer of all permutations on the monoidal product, and symmetric powers are used to build free exponential modalities \cite[Sec B]{ong2017quantitative}. On the other hand, in algebra, the term symmetric power refers to the coequalizers of all permutations on the tensor product, and here symmetric powers are instead used to build the free symmetric algebra. To avoid confusion, we will refer to them as (co)equalizer symmetric powers. Though we will also discuss the case when both notions coincide. 

For each $n \in \mathbb{N}$, let $\Sigma(n)$ be the symmetric group of all $n!$ permutations. For every permutation $\tau \in \Sigma(n)$ we obtain a natural isomorphism $\tau: A_1 \otimes \hdots \otimes A_n \to A_{\tau^{-1}(1)} \otimes \hdots \otimes A_{\tau^{-1}(n)}$. A symmetric monoidal category $\mathcal{L}$ is said to have \textbf{equalizer symmetric powers} \cite[Def 12]{ong2017quantitative} if for all objects $A \in \mathcal{L}$ and all $n \in \mathbb{N}$ there exists an object $A^{[n]}$ and a map $\epsilon_n: A^{[n]} \to A^{\otimes^n}$ which is the equalizer of all $n!$ permutations on $A^{\otimes^n}$ and such that $\otimes$ preserves these equalizers, that is, for all $n,m \in \mathbb{N}$, $\epsilon_n \otimes \epsilon_m: A^{[n]} \otimes A^{[m]} \to A^{\otimes^n} \otimes A^{\otimes^m}$ is the equalizer for all $\tau \otimes \tau^\prime$ where $\tau \in \Sigma(n)$ and $\tau^\prime \in \Sigma(n)$. By convention, $A^{[0]}=I$ with $\epsilon_I = 1_I$ and $A^{[1]} = A$ with $\epsilon_1 = 1_A$, while for $I$ we use the convention that $I^{[n]} = I$ and $\epsilon_n = 1_I$. 

Equalizer symmetric powers always produce an $\mathbb{N}$-graded monoidal coalgebra modality where: $\oc_n A = A^{[n]}$, $\mathsf{d} = \epsilon_1 = 1_A$, $\mathsf{w} = \epsilon_0 = 1_I$, and $\mu^I_n = 1_I$, while $\mathsf{p}_{n,m}$, $\mathsf{c}_{n,m}$, and $\mu^\otimes_n$ are defined using the equalizer universal property and so are defined as the unique maps such that:
\begin{align}
 \mathsf{c}_{n,m}; (\epsilon_n \otimes \epsilon_m) = \epsilon_{n+m} &&  \mathsf{p}_{n,m};\epsilon_n;\epsilon_m^{\otimes^n} = \epsilon_{nm} && \mu^\otimes_n;\epsilon_n = (\epsilon_n \otimes \epsilon_n); \theta
\end{align}
where $\theta: A^{\otimes^n} \otimes B^{\otimes^n} \to (A \otimes B)^{\otimes^n}$ is the permutation that rearranges the $A$'s and $B$'s but keeping their order. Intuitively, $\mathsf{c}_{n,m}$ is the equalization of all possible reorganizations $A^{\otimes^{n+m}} \to A^{\otimes^n} \otimes A^{\otimes^m}$, while $\mathsf{p}_{n,m}$ is the equalization of all possible reorganizations $A^{\otimes^{nm}} \to {A^{\otimes^m}}^{\otimes^n}$, and $\mu^\otimes_n$ is the equalization of all possible reorganizations $A^{\otimes^n} \otimes B^{\otimes^n} \to (A \otimes B)^{\otimes^n}$. 

Now suppose that we are also in an additive symmetric monoidal category. Then equalizer symmetric powers always give an $\mathbb{N}$-graded monoidal additive bialgebra differential modality where $\overline{\mathsf{d}} = 1_A$, $\overline{\mathsf{w}} = 1_I$, and $\overline{\mathsf{c}}_{n,m}$ is defined using the equalizer universal property, so the unique map such that:
\begin{align}
    \overline{\mathsf{c}}_{n,m}; \epsilon_{n+m} = \binom{n+m}{n} \cdot (\epsilon_n \otimes \epsilon_m)
\end{align}
The binomial coefficient for $\overline{\mathsf{c}}$ is necessary for the graded bimonoid identity. If we also have finite biproducts, then by Thm \ref{thm:Seely} we also obtain the graded Seely isomorphism, so: $(A \oplus B)^{[n]} \cong \bigoplus^n_{k=0} A^{[k]} \otimes B^{[n-k]}$. 

Now the induced deriving transformation will simply be $\partial_n = \overline{\mathsf{c}}_{n,1}$, which is the unique map such that:
\begin{align}
    \partial_n; \epsilon_{n+1} = (n+1) \cdot (\epsilon_n \otimes 1_A )
\end{align} 
In other words, $\partial_n$ is given by equalizing all possible $n+1$ ways of inserting an extra $A$ into $A^{\otimes^n}$. We will explain below how this (in the dual case) corresponds precisely to differentiating polynomials.
 
\begin{proposition}\label{prop:sympowers} An additive symmetric monoidal category with (co)equalizer symmetric powers is an $\mathbb{N}$-graded (co)differential linear category. 
\end{proposition}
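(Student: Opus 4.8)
The plan is to exploit the fact that every piece of structure is defined through the universal property of the equalizers $\epsilon_n$, together with the hypothesis that $\otimes$ preserves these equalizers. The whole argument rests on one reduction principle: each $\epsilon_n$, and more generally each tensor $\epsilon_{n_1}\otimes\cdots\otimes\epsilon_{n_k}$, is a monomorphism (being an equalizer), so two maps with codomain $A^{[n]}$ (resp.\! a tensor of symmetric powers) are equal as soon as they agree after post-composition with the relevant $\epsilon$. Consequently every axiom of Definitions \ref{def:coalg}--\ref{def:coder} can be transported into the ambient symmetric monoidal category by post-composing both sides with the appropriate equalizer map and rewriting each structure map ($\mathsf{p},\mathsf{c},\mu^\otimes,\overline{\mathsf{c}},\partial$) using its defining equation. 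After this rewriting both sides become explicit $\mathbb{N}$-linear combinations of composites of the $\epsilon$'s with permutations of the tensor factors, and the identity to be checked is purely combinatorial.

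First I would record the well-definedness of the structure maps: for each defining equation one must check that the map on the right-hand side equalizes all permutations, so that the universal property of $\epsilon_{n+m}$ (resp.\! $\epsilon_{nm}$) applies. The only non-formal instance is $\overline{\mathsf{c}}$, where one checks that the symmetrization attached to the coefficient $\binom{n+m}{n}$ is invariant under $\Sigma(n+m)$; this is exactly where the binomial multiplicity is forced.

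Then I would verify the axioms in three groups. The graded (monoidal) coalgebra modality axioms of Definitions \ref{def:coalg} and \ref{def:moncoalg} (the comonad laws, comonoidality of $\mathsf{c},\mathsf{w}$, the monoidal-functor laws for $\mu^\otimes,\mu^I$, and the coalgebra and monoidal compatibilities of $\mathsf{p},\mathsf{d},\mathsf{c}$) are all \emph{structural}: after post-composing with the relevant $\epsilon$, each side collapses to one and the same canonical reorganization map $A^{\otimes N}\to A^{\otimes N}$, since $\epsilon$ absorbs every permutation within a block, and the two sides then coincide by the uniqueness part of the equalizer property. Here the content is only coherence of the symmetry, so these follow with no counting. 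The graded additive bialgebra axioms of Definition \ref{def:addbialg}, and the compatibilities of Definition \ref{def:monaddbialg}, are the substantial ones: the bimonoid equation (\ref{bialgeq}) and the convolution equation (\ref{!feq}) reduce, after the same post-composition, to identities between sums over shuffles, and matching the two sides amounts to Vandermonde's identity $\sum_{a}\binom{t}{a}\binom{u}{r-a}=\binom{t+u}{r}$ together with a bijection between shuffle decompositions. Finally the codereliction $\overline{\mathsf{d}}=1_A$ is immediate: its linear rule is $1_A;\epsilon_1=1_A$, while its product, chain, and monoidal rules reduce, via $\partial_n=\overline{\mathsf{c}}_{n,1}$ and the defining equation $\partial_n;\epsilon_{n+1}=(n+1)\cdot(\epsilon_n\otimes 1_A)$, to the same shuffle bookkeeping, now counting the $n+1$ ways of inserting one extra tensor factor.

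The main obstacle will be the bimonoid axiom (\ref{bialgeq}) and the product rule, where I must keep careful track of which permutation each term produces and confirm that the binomial multiplicities conspire, through Vandermonde's identity, to make the shuffle-weighted sum on the right equal the single symmetrized map on the left; this is precisely the step that fails if one drops the coefficients in $\overline{\mathsf{c}}$. The codifferential half of the statement I would not prove separately: coequalizer symmetric powers in $\mathcal{L}$ are exactly equalizer symmetric powers in $\mathcal{L}^{\mathrm{op}}$, and an $\mathbb{N}$-graded differential linear category structure on $\mathcal{L}^{\mathrm{op}}$ is, by reversing every arrow, an $\mathbb{N}$-graded codifferential linear category structure on $\mathcal{L}$, so the second case follows from the first by duality.
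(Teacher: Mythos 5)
Your proposal is correct and follows essentially the same route as the paper: the paper gives no separate proof of Prop \ref{prop:sympowers}, but rather defines the structure maps by exactly the equalizer universal properties you invoke and leaves implicit the verification that you correctly organize into mono-cancellation along the $\epsilon$'s, symmetry coherence for the (monoidal) coalgebra-modality axioms, shuffle/Vandermonde combinatorics for the bialgebra and differential axioms, and duality for the coequalizer half. One point of care when executing the plan: the defining equations $\overline{\mathsf{c}}_{n,m};\epsilon_{n+m}=\binom{n+m}{n}\cdot(\epsilon_n\otimes\epsilon_m)$ and $\partial_n;\epsilon_{n+1}=(n+1)\cdot(\epsilon_n\otimes 1_A)$ must be read, as your ``symmetrization'' and ``insertion'' language suggests, as sums over the $\binom{n+m}{n}$ distinct $(n,m)$-shuffles (resp.\ the $n+1$ insertions) applied to $\epsilon_n\otimes\epsilon_m$ (resp.\ $\epsilon_n\otimes 1_A$), since the literal scalar-multiple right-hand sides do not equalize all of $\Sigma(n+m)$, and on that reading the universal property of $\epsilon_{n+m}$ would not even furnish the maps $\overline{\mathsf{c}}_{n,m}$ and $\partial_n$ whose invariance check your argument requires.
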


\begin{example}\label{ex:RELsym} $\mathsf{REL}$ has equalizer symmetric powers, and we have already in fact defined them. Indeed, we may set $X^{[n]} = \oc_n X$, where $\oc_n X$ is defined as in Ex \ref{ex:REL!} and define the relation:
\begin{align*}
    \epsilon_n = \lbrace (f, (x_1, \hdots, x_n)) ~\vert~ x_i \in X, f \in \oc_n X \text{ s.t. } f(x)=1 \text{ if } x=x_i \text{ and } f(x) =0 \text{ o.w. } \rbrace \subseteq \oc_n X \times X^{\times^n}
\end{align*}
The $\mathbb{N}$-graded differential linear structure induced by the equalizer symmetric powers on $\mathsf{REL}$ is exactly the one defined in the above examples. 
\end{example}

\begin{example}\label{ex:mod} Let $k$ be a commutative semiring and let $k\text{-}\mathsf{MOD}$ be the category of modules over $k$ and $k$-linear maps between them. $k\text{-}\mathsf{MOD}$ is an additive symmetric monoidal category with the standard algebraic tensor product and $k$-linear enrichment. It is well-known that $k\text{-}\mathsf{MOD}$ is both complete and cocomplete, which in particular implies that $k\text{-}\mathsf{MOD}$ has all equalizers and coequalizers. However, the tensor product $\otimes$ does not necessarily preserve equalizers, and so $k\text{-}\mathsf{MOD}$ may not always have equalizer symmetric powers. On the other hand, $\otimes$ preserves all colimits and so $k\text{-}\mathsf{MOD}$ has coequalizer symmetric powers and is, therefore, an $\mathbb{N}$-graded codifferential category. The coequalizer symmetric powers are well-known since they are the symmetrized tensor spaces used in the construction of the symmetric algebra. For a $k$-module $M$, the $n$-th coequalizer symmetric power $S^{n}M$ is defined as the quotient of $M^{\otimes^n}$ by the action of all $n!$ permutations. So an arbitrary element of $S^n M$ can be expressed in terms of finite sums of \textbf{pure symmetric tensors} $x_{1} \otimes_{s} ... \otimes_{s} x_{n}$ for all $x_i \in M$, and these satisfy that for all $n!$ permutations $\tau$: $x_{\tau(1)} \otimes_{s} ... \otimes_{s} x_{\tau(n)} = x_{1} \otimes_{s} ... \otimes_{s} x_{n}$. So $\epsilon_n: M^{\otimes^n} \to S^n(M)$ is defined as:
\[\epsilon_n(x_1 \otimes \hdots \otimes x_n) = x_{1} \otimes_{s} ... \otimes_{s} x_{n}\] 
The resulting graded codifferential structure is given as follows (if the maps look backwards, recall we are in the dual setting): 
\begin{gather*}
\mathsf{d}(x) = x \quad \quad \quad \mathsf{p}_{m,n} \left( (x_{1} \otimes_{s} ... \otimes_{s} x_{n}) \otimes_{s} ... \otimes_{s} (x_{nm-n+1} \otimes_{s} ... \otimes_{s} x_{nm}) \right) = x_{1} \otimes_{s} ... \otimes_{s} x_{nm} \\
 \mathsf{w}(1) =1 \quad \quad \quad  \mathsf{c}_{n,m}\left((x_{1} \otimes_{s} ... \otimes_{s} x_{n}) \otimes (x_{n+1} \otimes_{s} ... \otimes_{s} x_{n+m}) \right) = x_{1} \otimes_{s} ... \otimes_{s} x_{n+m} \\
\mu^I_n(1) = 1 \quad \quad \mu^\otimes_n\left((x_{1} \otimes y_1) \otimes_{s} ... \otimes_{s} (x_{n} \otimes y_n) \right) = (x_{1} \otimes_{s} ... \otimes_{s} x_{n}) \otimes (y_{1} \otimes_{s} ... \otimes_{s} y_{n}) \\ 
\overline{\mathsf{w}}(1) = 1 \quad \quad \overline{\mathsf{c}}_{n,m}\left(x_{1} \otimes_{s} ... \otimes_{s} x_{n+m}\right) = \sum\limits_{\substack{1 \leq i_j \leq n+m \\ i_1 < \hdots < i_n \\ i_{n+1}  < ... < i_{n+m}}} \left( x_{i_1}\otimes_s \hdots \otimes_{s} x_{i_n} \right) \otimes \left(x_{i_{n+1}} \otimes_s \hdots \otimes_{s} x_{i_{n+m}} \right) \\ 
\overline{\mathsf{d}}(x) = x \quad \quad \quad \partial_{n}(x_{1} \otimes_{s} ... \otimes_{s} x_{n+1}) = \sum^{n}_{i=0} \left(x_{1} \otimes_{s} ... \otimes_{s} x_{i-1} \otimes x_{i+1} \otimes_{s} ... \otimes_{s} x_{n+1} \right) \otimes x_{i}
\end{gather*}
Now if $M$ is a free $k$-module with basis $X= \lbrace x_1, x_2, \hdots, x_m \rbrace$, then $S^{n}M$ is isomorphic to $k_n[x_1, \hdots, x_m]$, the submodule of the polynomial algebra $k[x_1, \hdots, x_m]$ generated by monomials of degree $n$. From this point of view, $\mathsf{p}$ corresponds to polynomial composition, $\mathsf{d}$ gives the monomials of degree $1$, $\mathsf{c}$ is given by polynomial multiplication, $\mathsf{w}$ gives the constant polynomials, and lastly $\partial$ correspond to differentiating polynomials. Indeed, if $p(\vec x) \in k_n[x_1, \hdots, x_m]$, then:
\[\partial_n(p(\vec x)) = \sum^m_{i=1} \frac{\partial p(\vec x)}{\partial x_m} \otimes x_i\] 
Thus by Prop \ref{prop:sympowers}, $k\text{-}\mathsf{MOD}$ is an $\mathbb{N}$-graded codifferential linear category, and therefore $k\text{-}\mathsf{MOD}^{op}$ is an $\mathbb{N}$-graded differential linear category. Furthermore, if $M$ is finite-dimensional, then $S^n M$ are all finite-dimensional as well. Therefore, for the subcategory of finite-dimensional $k$-modules, $k\text{-}\mathsf{FMOD}$, its opposite category $k\text{-}\mathsf{FMOD}^{op}$ is also an $\mathbb{N}$-graded differential linear category. In particular when $k=\mathbb{C}$, the field of complex numbers, $\mathbb{C}\text{-}\mathsf{FMOD} \cong \mathsf{FHILB}$. So $\mathsf{FHILB}^{op}$ is also an $\mathbb{N}$-graded differential linear category, solving the problem of \cite{lemay2020fhilb}. 
\end{example}

Now there are many examples where the $n$-th equalizer symmetric power $A^{[n]}$ is also the $n$-th coequalizer symmetric power, as is the case in the category of vector spaces over a field of characteristic $0$ \cite[Chap III. Sec 6.3]{bourbaki1998algebra}. A particular instance of when this is the case is when a certain idempotent splits. First, let us now work in $\mathbb{Q}_{\geq 0}$-additive symmetric monoidal categories, which are additive symmetric monoidal categories whose homsets are also $\mathbb{Q}_{\geq 0}$-modules, so we may scalar multiply by positive rationals $\frac{p}{q} \in \mathbb{Q}_{\geq 0}$. In this case, for every object $A$ and every $n \in \mathbb{N}$, define the natural transformation $\rho(n): A^{\otimes^n} \to A^{\otimes^n}$ as the sum of all $n!$ permutations on $A^{\otimes n}$ divided by a factor of $n!$, so:
\[\rho(n) := \sum_{\tau \in \Sigma(n)} \frac{1}{n!} \cdot \tau\]
Observe that $\rho(n)$ is an idempotent. Indeed, first note that since $\Sigma(n)$ is a finite group, $\rho(n)$ both equalizes and coequalizes all permutations, that is, for all permutation $\tau^\prime \in \Sigma(n)$, we have that ${\rho(n); \tau^\prime = \rho(n) = \tau^\prime; \rho(n)}$. Then since $\Sigma(n)$ is of size $n!$, we have that: 
\[ \rho(n); \rho(n) = \left( \sum_{\tau \in \Sigma(n)} \frac{1}{n!} \cdot \tau \right); \rho(n) = \sum_{\tau \in \Sigma(n)} \frac{1}{n!} \cdot \tau; \rho(n) = \sum_{\tau \in \Sigma(n)} \frac{1}{n!} \cdot \rho(n) = \frac{n!}{n!} \cdot \rho(n) = \rho(n) \]
So $\rho(n)$ is indeed an idempotent. Now suppose that $\rho(n)$ is, in fact, a split idempotent, which means there is an object $A^{[n]}$ and maps $\epsilon_n: A^{[n]} \to A^{\otimes^n}$ and $\varsigma_n: A^{\otimes^n} \to A^{[n]}$, such that: 
\begin{align*}
    \epsilon_n; \varsigma_n = 1_{\oc_n A } && \varsigma_n; \epsilon_n = \rho(n)
\end{align*}
For split idempotents, it is well-known that $A^{[n]}$ is both the equalizer and coequalizer of the identity $1$ and the idempotent $\rho(n)$, and these are absolute (co)limits, so in particular, $\otimes$ preserves these (co)equalizers.  Furthermore, since $\rho(n)$ also equalizes and coequalizes all permutations, it follows that $A^{[n]}$ is indeed the $n$-th (co)equalizer symmetric power. Thus a $\mathbb{Q}_{\geq 0}$-additive symmetric monoidal category where $\rho(n)$ splits for all $n$ has equalizer symmetric powers $A^{[n]}$ which are also coequalizer symmetric powers, and therefore by Prop \ref{prop:sympowers} is both an $\mathbb{N}$-graded differential linear category and an  $\mathbb{N}$-graded codifferential linear category. Using the splitting of $\rho(n)$, we can define all the necessary structure maps for the monoidal additive bialgebra differential modality as follows: 
  \begin{equation}\begin{gathered}\label{eq:diff-split}
\mathsf{p}_{n,m} = \epsilon_{nm}; \varsigma^{\otimes^n}_m ; \varsigma_n \quad  \quad  \quad  \mathsf{d} = 1_A  \quad  \quad  \quad \mathsf{c}_{n,m} = \epsilon_{n+m};(\varsigma_n \otimes \varsigma_m) \quad \quad  \quad  \mathsf{w}=1_I  \\
\mu^\otimes_n = (\epsilon_n \otimes \epsilon_n); \cong ; \varsigma_{n} \quad\quad  \quad   \mu^I_n = 1_I  \quad  \quad  \quad \overline{\mathsf{c}}_{n,m} = \binom{n+m}{n} \cdot (\epsilon_n \otimes \epsilon_m); \varsigma_{n+m} \quad\quad  \quad   \overline{\mathsf{w}} = 1_I \\
\overline{\mathsf{d}}=1_A \quad \quad \quad \partial_n = (n+1) \cdot (\epsilon_n \otimes 1_A); \varsigma_{n+1}
\end{gathered}\end{equation}
By dualizing these formulas, we also obtain the necessary structure for an $\mathbb{N}$-graded codifferential category. 

\begin{example} $\mathsf{REL}$ is a $\mathbb{Q}_{\geq 0}$-additive symmetric monoidal category since addition is idempotent in $\mathsf{REL}$, that is, $S + S = S$. In particular, the idempotent relation $\rho(n) \subseteq X^{\times^n} \times X^{\times^n}$ splits through $\oc_n X$ via the relation $\epsilon_n \subseteq  \oc_n X \times X^{\times^n}$, as defined in Ex \ref{ex:RELsym}, and the relation: 
\begin{align*}
    \varsigma_n = \lbrace ((x_1, \hdots, x_n),\sum^n_{i=1} \overline{\mathsf{d}}_{x_i}) ~\vert~ x_i \in X \rbrace \subseteq X^{\times^n} \times \oc_n X 
\end{align*}
Applying the constructions of (\ref{eq:diff-split}) results precisely in the differential structure described above. Applying the dual of (\ref{eq:diff-split}) also makes $\mathsf{REL}$ an $\mathbb{N}$-graded codifferential category -- which also follows from the fact that $\mathsf{REL}$ is self-dual. 
\end{example}

\begin{example} Let $k$ be a commutative semiring that is also a $\mathbb{Q}_{\geq 0}$-algebra. In particular this implies that $k\text{-}\mathsf{MOD}$ is a $\mathbb{Q}_{\geq 0}$-additive symmetric monoidal category. It is well known that all idempotents split in $k\text{-}\mathsf{MOD}$. In particular, the idempotent $\rho(n): M^{\otimes^n} \to M^{\otimes^n}$ can split through $S^n M$ via the maps $\epsilon_n: S^n M \to M^{\otimes^n}$ and $\varsigma_n: M^{\otimes^n} \to S^n M$ respectively defined as follows \cite[Chap III. Sec 6.3]{bourbaki1998algebra}: 
\begin{align*}
    \epsilon_n\left( x_1 \otimes_s \hdots \otimes_s x_n \right) = \frac{1}{n!} \cdot \sum\limits_{\tau \in \Sigma(n)} x_{\tau(1)} \otimes \hdots \otimes x_{\tau(n)} && \varsigma_n\left(x_1 \otimes \hdots \otimes x_n  \right) = x_1 \otimes_s \hdots \otimes_s x_n 
\end{align*}
Applying the dual constructions of (\ref{eq:diff-split}) results precisely in the codifferential structure given in Ex \ref{ex:mod}. Interestingly, we could also apply (\ref{eq:diff-split}) directly to $k\text{-}\mathsf{MOD}$ instead to obtain an $\mathbb{N}$-graded differential structure, where in particular the deriving transformation is given by: 
\begin{align*}
     \partial_{n}\left( (x_{1} \otimes_{s} ... \otimes_{s}  x_{n}) \otimes x \right) = (n+1) \cdot x_{1} \otimes_{s} ... \otimes_{s}  x_{n} \otimes_s x 
\end{align*}
So $k\text{-}\mathsf{MOD}$ is also an $\mathbb{N}$-graded differential linear category. Note that the resulting graded bimonoid is not the same that the one in Ex \ref{ex:mod}, though they are  isomorphic (see \cite{vienney2023string} for more details). Nevertheless, this structure of differential category as a whole is somewhat mysterious.
\end{example}

\section{Conclusion: Future Work}

This paper introduced graded differential categories, a new branch in the theory of differential categories. There remain many intriguing questions to answer about graded differential categories and many possible paths for interesting future work. Let us conclude this paper by briefly describing some.
\begin{enumerate}[{\em (i)}]
\item An obvious objective is always to construct and study new examples of graded differential categories and models of $\mathsf{GDiLL}$.
\item We could study more carefully the relation between syntax and semantics, in particular the cut-elimination for the different sequent calculus presented here. It would also be interesting to understand the relations of graded differential categories with the differential linear logic indexed by differential operators of \cite{breuvart2023unifying}, and with the Taylor expansion of differential linear logic, notably as treated in the recent work \cite{kerjean2023taylor}. 
\item Also, in most models of $\mathsf{GLL}$, one considers the grading to be over an ordered semiring. We could investigate what order adds to the story of differentiation. 
\item It is natural to ask whether it is possible to take a graded differential category and somehow glue together the graded exponential $\oc_r$ to obtain a differential category. Or conversely, when is it possible to start from a non-graded differential category and somehow extract a graded differential category.
\item Another question is whether it is possible to build a Cartesian differential category \cite{blute2009cartesian} from a graded differential category, or possibly some yet-to-be-defined graded version of a Cartesian differential category. 
\end{enumerate}
So in conclusion, this paper is only the start of the story of graded differential categories -- there are many interesting questions to investigate and exciting future work still to be done. 

\section*{Acknowledgement} 
The authors would like to thank Rick Blute, Flavien Breuvart, Cole Comfort, Sacha Ikonicoff, Shin-ya Katsumata, Marie Kerjean, Paul-André Melliès, Simon Mirwasser, Phil Scott, Christine Tasson and Lionel Vaux for many useful discussions and their support of this research project. The authors would also like to thank Flavien Breuvart for pointing out an error in an example, which we have now fixed. 


\bibliographystyle{./entics}     
\bibliography{MFPSbib}   

\end{document}